\documentclass{article}
\usepackage[T1]{fontenc}
\usepackage{xspace}
\usepackage[colorlinks=true,linkcolor=blue,citecolor=blue]{hyperref}
\usepackage{amssymb,amsmath,amsthm,graphicx,color}
\usepackage{physics, authblk}
\usepackage{mathtools}
\usepackage{enumitem}
\usepackage[margin=1.in]{geometry}
\usepackage{multirow,array}
\allowdisplaybreaks

\usepackage[capitalise,nameinlink]{cleveref}

\usepackage{algorithm}
\usepackage[noend]{algpseudocode}

\theoremstyle{plain}
\newtheorem{theorem}{Theorem}
\newtheorem{corollary}[theorem]{Corollary}
\newtheorem{proposition}[theorem]{Proposition}
\newtheorem{lemma}[theorem]{Lemma}
\newtheorem{claim}[theorem]{Claim}
\newtheorem{fact}[theorem]{Fact}

\newtheorem{definition}[theorem]{Definition}

\theoremstyle{plain}

\newcommand{\complexityclass}[1]{{\mathsf{#1}}\xspace}

\newcommand{\NP}{\complexityclass{NP}}

\newcommand{\BQP}{\complexityclass{BQP}}

\newcommand{\MIP}{\complexityclass{MIP}}
\newcommand{\QMIP}{\complexityclass{QMIP}}
\newcommand{\RE}{\complexityclass{RE}}
\newcommand{\SBQP}{\complexityclass{SBQP}}

\newcommand{\QMA}{\complexityclass{QMA}}

\newcommand{\EXP}{\complexityclass{EXP}}
\newcommand{\NEXP}{\complexityclass{NEXP}}
\newcommand{\SQMA}{\complexityclass{SQMA}}
\newcommand{\PP}{\complexityclass{PP}}
\newcommand{\PSPACE}{\complexityclass{PSPACE}}

\newcommand{\stateQMA}{\complexityclass{stateQMA}}


\renewcommand{\Pr}{\mathop{\bf Pr\/}}
\newcommand{\val}{\mathop{\bf val\/}}

\newcommand{\Adj}{\mathop{Adj\/}}




\newcommand{\poly}{\mathrm{poly}}

\newcommand{\polylog}{\mathrm{polylog}}


\newcommand{\eps}{\varepsilon}

\renewcommand{\tilde}{\widetilde}

\newcommand{\calA}{\mathcal{A}}
\newcommand{\calB}{\mathcal{B}}
\newcommand{\calC}{\mathcal{C}}

\newcommand{\calM}{\mathcal{M}}


\newcommand{\GapCSP}{\textsc{GapCSP}}

\begin{document}
\title{Quantum Merlin-Arthur and proofs without relative phase}
 \author[]{Roozbeh Bassirian\thanks{\href{mailto:roozbeh@uchicago.edu}{roozbeh@uchicago.edu}}}
 \author[]{Bill Fefferman\thanks{\href{mailto:wjf@uchicago.edu}{wjf@uchicago.edu}}}
 \author[]{Kunal Marwaha\thanks{\href{mailto:kmarw@uchicago.edu}{kmarw@uchicago.edu}}}
 \affil{University of Chicago}
\date{}
\maketitle

\begin{abstract}
We study a variant of $\QMA$ where quantum proofs have no relative phase (i.e. non-negative amplitudes, up to a global phase).
If only completeness is modified, this class is equal to $\QMA$~\cite{sqma}; but if both completeness and soundness are modified, the class (named $\QMA^+$ by Jeronimo and Wu~\cite{qma2plus}) can be much more powerful.
We show that $\QMA^+$ with some constant gap is equal to $\NEXP$, yet $\QMA^+$ with some \emph{other} constant gap is equal to $\QMA$. 
One interpretation is that Merlin's ability to ``deceive'' originates from \emph{relative phase} at least as much as from \emph{entanglement}, since $\QMA(2) \subseteq \NEXP$.
\end{abstract}

\section{Introduction}
The strangeness of quantum states has at least two fundamental sources: \emph{entanglement}, the source of ``spooky action at a distance''; and \emph{relative phase}, which allows for destructive interference. We use complexity theory to probe these sources of strangeness. Extending the main result of \cite{qma2plus}, we find that $\QMA^+$ ($\QMA$ where quantum proofs have no relative phase) is as powerful as $\NEXP$.

A $\QMA_{c,s}$ protocol is a verification task for a quantum computer (termed ``Arthur'') when interacting with a dishonest but all-powerful machine (termed ``Merlin'').
If the statement is true (``completeness''), Merlin sends a quantum state (``proof'') that truthfully convinces Arthur. If the statement is false (``soundness''), Merlin will send any quantum state possible to deceive Arthur. A valid protocol distinguishes these cases, succeeding with probability at least $c$ in completeness and at most $s < c$ in soundness. Canonically, $\QMA$ is the class of all valid $\QMA_{2/3, 1/3}$ protocols.

One could potentially reduce the power of $\QMA$ by restricting Merlin's proof in completeness. Surprisingly, many restrictions of this type \emph{do not reduce the power of the class}. For example, this is true even if the quantum state is a \emph{subset state} (with no relative phase nor relative non-zero amplitude) \cite{sqma}. The reason behind this is \emph{promise gap amplification}: there exist techniques to increase the gap $c-s$ to $1 - 2^{-p(n)}$ for any polynomial $p(n)$. As a result, a subset state with polynomially small overlap with the best completeness proof succeeds. This argument generalizes to any set of states that form an $\epsilon_n$-covering of all $n$-qubit quantum states, where $\epsilon_n$ is at least inverse polynomial in $n$.

By contrast, restricting Merlin's proof in soundness seems to increase the power of this complexity class, since this reduces Merlin's ability to ``deceive''. For example, if Merlin must send a quantum proof \emph{without relative phase}, Arthur can ask about its \emph{sparsity} ($\ell_1$ norm).
When a state has no relative phase, a low overlap with $\ket{+}^{\otimes n}$ actually implies it is sparse, as opposed to a state with destructively interfering phases (i.e. any other Hadamard basis vector).

One popular variant of $\QMA$ restricts Merlin's \emph{entanglement} over a fixed barrier; it is named $\QMA(2)$ (as if there are two unentangled Merlins each sending a quantum proof \cite{qma2_defn}). This complexity class may seem more powerful than $\QMA$, but despite much study~\cite{blier2010quantum,aaronson2008power,chen2010short, pereszlényi2012multiprover,harrow2013testing,qma2_yirka,she2022unitary}, little is known except the trivial bounds $\QMA \subseteq \QMA(2) \subseteq \NEXP$.

What happens if one restricts both \emph{entanglement} and \emph{relative phase}?
\cite{qma2plus} define $\QMA^+_{c,s}$ and $\QMA^+(2)_{c,s}$, where quantum proofs are required to have no relative phase (non-negative amplitudes, up to a global phase) \emph{in both cases}.\footnote{As noted before, restricting the state in completeness may not change the complexity class, but restricting the state in soundness can make the class more powerful, since the latter limits Merlin's adversarial strategies.} Surprisingly, \cite{qma2plus} show the existence of constants $1 > c > s > 0$ such that $\QMA^+(2)_{c,s} = \NEXP$, crucially including a protocol to estimate the \emph{sparsity} of a quantum proof.
This hints perhaps at a route to prove $\QMA(2) = \NEXP$, since there are other constants $1 > c' > s' > 0$ where $\QMA^+(2)_{c',s'} = \QMA(2)$.\footnote{This is because every state has constant overlap with some state without relative phase. See also~\Cref{prop:qmaplus_nearly_qma}.}

In this work, we show that restricting \emph{relative phase} alone gives the power of $\NEXP$; i.e., there exist constants $1 > c > s > 0$ where $\QMA^+_{c,s} = \NEXP$. Note that assuming $\EXP \ne \NEXP$, this implies $\QMA^+$ cannot be amplified, since as before, there are other constants  $1 > c' > s' > 0$ where $\QMA^+_{c',s'} = \QMA \subseteq \EXP$.
As a result, techniques to prove $\QMA(2) = \QMA^+(2) = \NEXP$ must crucially use the unentanglement promise inherent in $\QMA(2)$. See \Cref{fig:qma2_vs_qmaplus} and \Cref{fig:qma_vs_nexp} for a pictorial description.

\begin{figure}[t]
\centering
\includegraphics[width=0.5\textwidth]{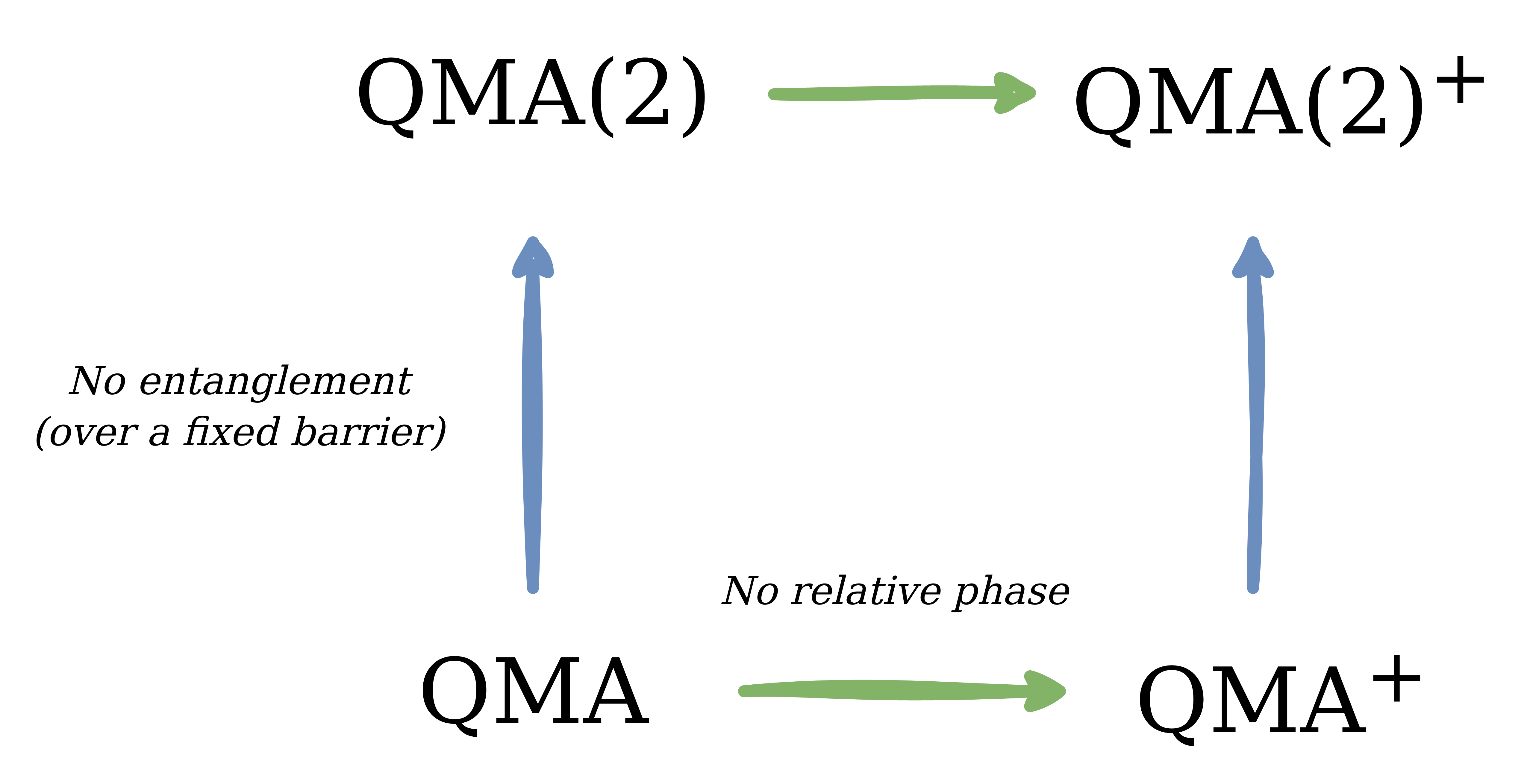}
\caption{\footnotesize Relationship between $\QMA^+$ and $\QMA(2)$. \cite{qma2plus} show that for some constants $1 > c > s > 0$, $\QMA^+(2)_{c,s} = \NEXP$. We show that the same is true for $\QMA^+$. 
Restricting relative phase does not restrict entanglement across a fixed barrier: for example, consider the GHZ state $\ket{0}^{\otimes n} + \ket{1}^{\otimes n}$, or more generally states where the Schmidt vectors have no relative phase.}
\label{fig:qma2_vs_qmaplus}
\end{figure}

\subsection{Techniques}
Our primary technical contribution is to show a $\QMA^+$ protocol for a $\NEXP$-complete problem. This directly extends the work of Jeronimo and Wu~\cite{qma2plus}, who show a $\QMA^+(2)$ protocol for a $\NEXP$-complete problem. As in~\cite{qma2plus}, we study constraint satisfaction problems (CSPs) with \emph{constant} gap. In $(1, \delta)$-$\GapCSP$, either \emph{all} constraints can be satisfied, or at most a $\delta$ fraction of constraints can be satisfied.
These problems are known to be $\NP$-hard or $\NEXP$-hard (depending on the problem size) using the PCP theorem~\cite{AS92, ALMSS98, harsha2004robust}.

Before proving $\QMA^+$ with some constant gap equals $\NEXP$, we prove $\QMA^+_{\log}$ (with some other constant gap) equals $\NP$. This choice (also taken by \cite{qma2plus}) is pedagogical: it allows us to explain the protocol without worrying about input encoding size, since $\QMA^+$ has a polynomial amount of space and verifier runtime. Here, we consider $(1, \delta)$-$\GapCSP$ with polynomially many variables and clauses; the quantum proof must certify that there is a satisfying assignment to all clauses.

The $\QMA^+_{\log}(2)$ protocol of \cite{qma2plus} crucially relies on an estimate of \emph{sparsity} ($\ell_1$ norm) of a quantum state without relative phase.
The overlap of a $m$-qubit quantum state without relative phase $\ket{\psi}$ with $\ket{+}^{\otimes m}$ is exactly the value $2^{-m/2} \cdot \|\ket{\psi}\|_1$. With multiple quantum proofs $\ket{\psi_1} \otimes \dots \ket{\psi_k}$, one can \emph{estimate} the sparsity by repeating this ``sparsity test'' on each $\ket{\psi_i}$, and using a swap test to ensure that all $\ket{\psi_i}$ are approximately equal. Interestingly, no other part of their protocol requires the \emph{no relative phase} assumption.\footnote{Formally, \cite{qma2plus} studies states with non-negative amplitudes. Recall that the set of these states, up to global phase, are equivalent to states with \emph{no relative phase}.} 

In $\QMA^+_{\log}$, we have a single quantum proof, so we cannot use this test to estimate sparsity. Instead, we design a similar test that directly enforces a \emph{rigidity} property of the proof.\footnote{Note that we use the intuition of \emph{rigidity} in a more general context, where Arthur's tests, not a non-local game, enforce states of a certain form.} The required form is $\frac{1}{\sqrt{R}} \sum_{j \in [R]} \ket{j}\ket{\vec{v}_j}$, where the second register is constant-sized. Using a ``sparsity test'' over the second register, Arthur ensures that the second register has one $\vec{v}_j$ per $j$; but using the \emph{complement} of a ``sparsity test'' over the whole proof, Arthur ensures the overall state maximizes $\ell_1$ norm. States of the required form are optimal for this combination of tests. We make use of the \emph{no relative phase} property in \Cref{lemma:validity_upper_bound,lemma:non-negative_valid}.

Now we can describe our protocol. For each constraint $j \in R$, Arthur asks for the values $\vec{v}_j$ associated with the variables involved in constraint $j$. The protocol either enforces \emph{rigidity} of the quantum proof, or verifies the \emph{constraints} of the CSP. 
Note that we need two kinds of constraint checks: the values $\vec{v}_j$ must \emph{satisfy} constraint $j$, and the value of a variable must be \emph{consistent} across the constraints it participates in. For states with the rigidity property, checking \emph{satisfiability} is simple: measure in the computational basis and verify the measured constraint $j,\vec{v}_j$. States with the rigidity property will succeed with probability equal to the satisfying fraction of the CSP assignment.

Checking \emph{consistency} is done using a technique called ``regularization'' from the PCP literature~\cite{dinur2007pcp}; for each constraint $j$, we verify that each variable participating in $j$ has the same value in exactly $d$ other constraints for some constant $d$, in a way that the edges form an \emph{expander graph}. The expansion property guarantees that cheating on this test is as damaging as cheating on the satisfiability test. Jeronimo and Wu~\cite{qma2plus} use a swap test to implement these new checks, but this requires multiple quantum proofs. We show how to use a Hadamard test (which requires only one quantum proof) to achieve the same result, building on ideas from previous work~\cite{bassirian2022power}.
Since there exists a $\delta$ such that $(1, \delta)$-$\GapCSP$ is $\NP$-hard, this completes the proof of $\NP \subseteq \QMA^+_{\log}$ with some constant gap.

When scaling up to $\QMA^+$, one must be careful of how to succinctly encode the input of a $\NEXP$-complete problem. The PCP theorem allows us to choose $(1, \delta)$-$\GapCSP$ that is \emph{succinct}, but we need stronger properties.
Following the adjustments taken in \cite{qma2plus}, we choose a PCP system for $\NEXP$ that is both \emph{doubly explicit} and \emph{strongly uniform}. 
\emph{Doubly explicit} means that one can efficiently compute the variables participating in a given constraint \emph{and} the constraints a given variable participates in; using this, we can implement the consistency checks in polynomial time. \emph{Strongly uniform} means that the number of constraints a variable participates in is efficiently computable, and one of a fixed number of possibilities; using this, we only need to build a fixed number of expander graphs during regularization. Recent work also shows how to construct exponentially-sized expander graphs in polynomial time~\cite{lubotzky2009finite,alon2021explicit}. Once we are through these input encoding difficulties, our protocol is identical to that for $\NP$.

Fundamentally, the \emph{no relative phase} property allows Arthur to verify a number of constraints exponential in the number of qubits.
Attempts to do this for $\QMA(2)$ gave too small a promise gap~\cite{blier2010quantum, pereszlényi2012multiprover, GallNN12}, too many provers~\cite{aaronson2008power, chen2010short, chiesa2011improved}, or too much space or time~\cite{harrow2013testing, natarajan2023quantum}.
Jeronimo and Wu~\cite{qma2plus} show that $\QMA^+(2)$ circumvents this difficulty: using \emph{no relative phase} and \emph{unentanglement},  Arthur enforces the \emph{sparsity} of a quantum proof to solve a $\NEXP$-complete problem. At the center of our work is the insight that \emph{no relative phase} is enough for Arthur to require constant-sized answers to exponentially many questions, solving a $\NEXP$-complete problem with a single polynomial-size quantum proof.

\subsection{Related work}
\paragraph{The complexity class $\QMA(2)$} 
The complexity class $\QMA(2)$ is known to have promise gap amplification, and to be equal to $\QMA(k)$ for any $k$ at most polynomial in $n$~\cite{harrow2013testing}.
It is not obvious how to test for entanglement; even determining whether a polynomially-sized vector is entangled is $\NP$-hard \cite{gharibian2009strong}.
If there exist efficient approximate ``disentanglers'' that can create any separable state, then $\QMA = \QMA(2)$; see~\cite{aaronson2008power} for some progress. \cite{qma2_yirka} describe quantum variants of the polynomial hierarchy and connect their properties to bounds on $\QMA(2)$. 
It is not even known whether there is a \emph{quantum} oracle separating $\QMA$ and $\QMA(2)$~\cite{aaronson2021open}.

\begin{figure}[t]
\centering
\includegraphics[width=\textwidth]{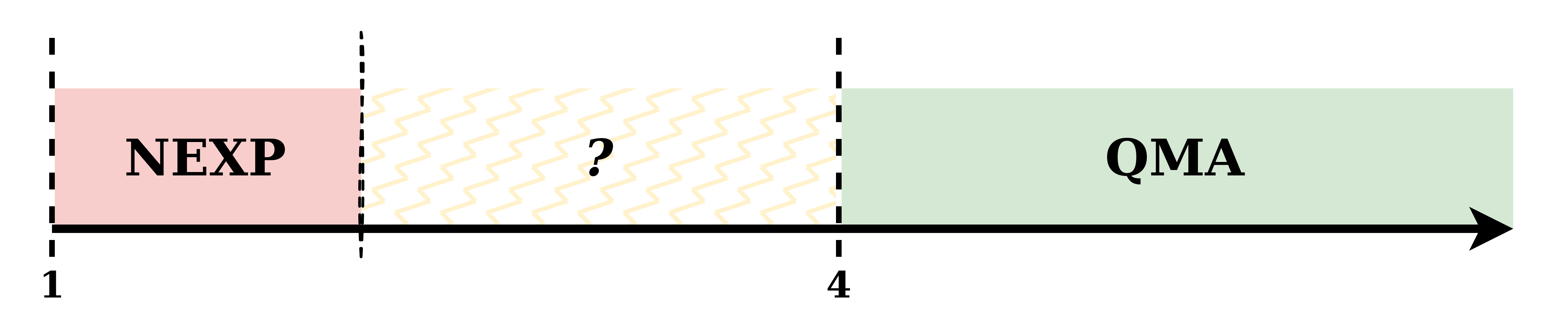}
\caption{\footnotesize Plot of $\QMA^+_{c,s}$ for increasing $\frac{c}{s}$. By our main result, there exist constants $c,s$ where $1 < \frac{c}{s} < 4$ and $\QMA^+_{c,s} = \NEXP$, but by \Cref{cor:qma_plus_equals_qma}, $\QMA^+_{c,s} = \QMA$ when $\frac{c}{s} > 4$. Gap amplification of $\QMA^+$ would imply $\QMA = \NEXP$.}
\label{fig:qma_vs_nexp}
\end{figure}

\paragraph{PCPs and expander graphs} Probabilistically checkable proofs (PCPs) show hardness for CSPs with a constant gap~\cite{AS92, ALMSS98, harsha2004robust}. Dinur~\cite{dinur2007pcp} proves the PCP theorem using a \emph{regularization} step, which adds new constraints associated with the edges of a \emph{regular} expander graph. \emph{Polynomial-time} regularization for $\NEXP$ requires an efficient description of exponentially-sized expander graphs. Recent advances in expander graph constructions~\cite{lubotzky2009finite,alon2021explicit} allow for this type of regularization, first used in \cite{qma2plus}.

\paragraph{Quantum states and relative phase} Up to a global phase, states with non-negative amplitudes are equivalent to states with no relative phase.
\cite{qma2plus} propose the class $\QMA^+$ and $\QMA^+(2)$, and show $\QMA^+(2) = \NEXP$. Note that by contrast, $\QMA$ restricted to states with \emph{real} amplitudes is equal to $\QMA$~\cite{McKAGUE_2013}.  Relative phase was recently proposed as a quantum resource~\cite{xu2023quantifying}. 
For both $\QMA$ and $\QMA(2)$, restricting Merlin in completeness to send a \emph{subset state} does not change the power of the complexity class (i.e., $\QMA = \SQMA$ and $\QMA(2) = \SQMA(2)$). \cite{sqma} also shows why their proof strategy fails if Merlin is restricted in both completeness and soundness.

\paragraph{Rigidity and games} Rigidity was first formally introduced in the context of non-local games \cite{mayers2004self}, and have been used to prove several complexity class equalities. 
For example, the CHSH game~\cite{chsh} tests for a maximally entangled state on two qubits~\cite{tsirelson1993some}, and was used to prove $\QMIP = \MIP^*$ \cite{qmip}. The Mermin-Peres magic square game tests for two copies of a maximally entangled quantum state, and was used to prove $\MIP^* = \RE$~\cite{mipre}. Rigidity is known to exist in broader contexts, including some (but not all) linear constraint games~\cite{Cui_2020} and monogamy-of-entanglement games~\cite{culf21}.


\section{Our setup}
We restate the definition of $\QMA^+(k)$ from \cite{qma2plus}. When the proof length is not specified, it is allowed to be at most any polynomial in input size. We follow the conventions $\QMA^+ := \QMA^+(1)$, $\QMA^+(k) := \bigcup_{c - s = \Omega(1)} \QMA^+(k)_{c,s}$, and $\QMA^+_{\log} := \QMA^+$ with proof length at most $O(\log n)$.
\begin{definition} [$\QMA^+_{\ell}(k)_{c,s}$]
Let $k: \mathbb{N} \rightarrow \mathbb{N}$ and $s, c, \ell: \mathbb{N} \rightarrow \mathbb{R}^+$ be polynomial time computable functions. A promise problem $\mathcal{L}_{\text{yes}}, \mathcal{L}_{\text{no}} \subseteq \{0, 1\}^*$ is in  $\QMA^+_{\ell}(k)_{c,s}$ if there exists a $\BQP$ verifier $V$ such that for every $n \in \mathbb{N}$ and every $x \in \{0, 1\}^n$:
\begin{itemize}
\item \textbf{Completeness:} if $x \in \mathcal{L}_{\text{yes}}$, then there exist unentangled states $\ket{\psi_1}, \ldots, \ket{\psi_{k(n)}}$, each on at most $\ell(n)$ qubits and with real non-negative amplitudes, s.t.\ 
\begin{align*}\Pr[V(x, \ket{\psi_1}\otimes \ldots \otimes \ket{\psi_{k(n)}}) \text{ accepts}] \geq c(n)\,.\end{align*}
\item \textbf{Soundness:} If $x \in \mathcal{L}_{\text{no}}$, then for every set of unentangled states $\ket{\psi_1}, \ldots, \ket{\psi_{k(n)}}$, each on at most $\ell(n)$ qubits and with real non-negative amplitudes, we have
\begin{align*}\Pr[V(x, \ket{\psi_1}\otimes \ldots \otimes \ket{\psi_{k(n)}}) \text{ accepts}] \leq s(n)\,.\end{align*}
\end{itemize}
\end{definition}

We make a few remarks on this complexity class, with extended discussion in \Cref{sec:qmaplus_subtleties}. First, we stress that the restriction to quantum proofs with non-negative amplitudes is \emph{promise-symmetric}, i.e. both in completeness and in soundness. This is unlike, for example, the class $\SQMA$~\cite{sqma}. Although the restriction to \emph{subset states} is stronger than \emph{non-negative amplitudes},\footnote{A \emph{subset state} is a uniform superposition over a subset of all computational basis states. States with non-negative amplitudes are conical combinations of subset states.} its use \emph{only} in completeness allows for $\SQMA = \QMA$. In fact, our work implies that $\QMA$ with a promise-symmetric \emph{subset state} restriction also interpolates from $\QMA$ to $\NEXP$, depending on the size of the promise gap.

We also explain why the promise gap of $\QMA^+$ cannot obviously be amplified. The first strategy one might try is \emph{parallel repetition}: an honest Merlin sends multiple copies of the original proof and Arthur verifies each copy of the original proof. For $\QMA$, entangling the copies in soundness does not help Merlin, since Arthur's protocol is sound for all quantum states. But perhaps unintuitively, it \emph{can} help for $\QMA^+$. (See \Cref{fact:max_of_reals_is_not_norm} for a simple example.) This is because partial measurement can \emph{destroy} the restriction on the quantum proof! For example, Arthur's first measurement may introduce relative phase in the rest of the proof.
This fact also obstructs more clever amplification strategies for $\QMA$ such as the proof-length preserving variant~\cite{marriott2005quantum}.

Furthermore, it is not clear how to upper-bound $\QMA^+$ beyond the trivial $\NEXP$.\footnote{$\QMA^+ \subseteq \NEXP$ by directly simulating the quantum proof and verifier.}
One technique to upper-bound $\QMA$ is to find the optimal proof using a semidefinite program (or a general convex program). This shows that $\QMA \subseteq \PSPACE$ (or $\QMA \subseteq \EXP$ with a convex program).
But these arguments do not immediately transfer to $\QMA^+$.
Convex optimization over states with non-negative amplitudes is equivalent to optimizing over the \emph{copositive cone}~\cite{burer2011copositive}.
Even the \emph{weak membership} problem over the copositive cone (deciding if the optimal vector is close to a non-negative vector) is $\NP$-hard in polynomially-sized vector spaces; recall that quantum states are in \emph{exponentially}-sized vector spaces.
These are the same reasons that prevent straightforward upper bounds for $\QMA(2)$~\cite{gharibian2009strong}.

\section{$\QMA^+_{\log}$ Protocol for $\NP$}

We first define the problem we consider:
\begin{definition}[CSP system]
    A \emph{$(N, R, q, \Sigma)$-CSP system} $\mathcal{C}$ on $N$ variables with values in $\Sigma$ consists of a set (possibly a multi-set) of $R$ constraints $\{\mathcal{C}_1, \dots, \mathcal{C}_R\}$ where the arity of each constraint is exactly $q$. 
\end{definition}
\begin{definition}[Value of CSP]
    The \emph{value} of a $(N, R, q, \Sigma)$-CSP system $\mathcal{C}$ is the maximum fraction of satisfiable constraints over all possible assignments $\sigma : [N] \to \Sigma$. The value of $\mathcal{C}$ is denoted $\val(\mathcal{C})$.
\end{definition}
\begin{definition}[$\GapCSP$]
    The \emph{$(1, \delta)$-$\GapCSP$ problem} inputs a CSP system $\mathcal{C}$. The task is to distinguish whether $\mathcal{C}$ is such that (in completeness) $\val(\mathcal{C}) = 1$ or (in soundness) $\val(\mathcal{C}) \le \delta$.
\end{definition}
Fix the input size $n$, and consider $(N,R,q,\Sigma)$-CSP systems where $N$ and $R$ are polynomials in $n$. Deciding whether or not these systems are satisfiable is $\NP$-hard. In fact, there exists $\delta < 1$ such that deciding $(1, \delta)$-$\GapCSP$ on these CSP systems is $\NP$-hard.
\begin{theorem}[\cite{dinur2007pcp}]
\label{thm:gapug_nphard}
There exist constants $q>1$ and $|\Sigma|$ such that $(1, 1/2)$-$\GapCSP$ is $\NP$-hard.
\end{theorem}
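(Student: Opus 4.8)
The plan is to invoke the PCP theorem in its "gap" or "hardness of approximation" form, and then massage the resulting constraint system into the exact shape demanded by our CSP-system definition (a fixed arity $q$ and a fixed alphabet $\Sigma$, both constant, with a perfect-completeness / soundness-$1/2$ gap). First I would recall the standard PCP theorem: $\NP = \PCP_{1,1/2}[O(\log n), O(1)]$, i.e. every $\NP$ language has a verifier that reads $O(\log n)$ random bits, queries $O(1)$ bits of a proof, always accepts a correct proof, and accepts any proof for a no-instance with probability at most $1/2$. Citing Dinur~\cite{dinur2007pcp} (or \cite{AS92, ALMSS98, harsha2004robust}) is legitimate here since it is stated earlier in the excerpt that these results are available.

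Next I would perform the routine translation from a PCP verifier to a $\GapCSP$ instance. For each of the $2^{O(\log n)} = \poly(n)$ random strings, the verifier's decision is a Boolean predicate on the $O(1)$ queried proof positions; introduce one constraint per random string, with the queried positions as its variables. This yields a CSP system with $R = \poly(n)$ constraints, $N = \poly(n)$ Boolean variables ($\Sigma = \bits$), and arity equal to the (constant) query complexity. Perfect completeness of the PCP verifier gives $\val(\mathcal{C}) = 1$ on yes-instances; soundness $1/2$ gives $\val(\mathcal{C}) \le 1/2$ on no-instances, because an assignment satisfying more than a $1/2$-fraction of constraints is exactly a proof the verifier accepts with probability $> 1/2$.

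The one genuinely fussy point — and the place I would spend the most care — is making the arity \emph{exactly} $q$ and the alphabet \emph{a fixed} $\Sigma$, rather than "at most $q$" with predicates of varying width. This is handled by standard padding tricks: if the query complexity is $Q$ but a particular constraint depends on fewer variables, pad its scope with dummy variables on which the predicate does not depend, so every constraint has arity exactly $q := Q$. If one prefers a small non-Boolean alphabet (as the phrasing "constants $q > 1$ and $|\Sigma|$" allows), one can instead group bits into blocks or appeal directly to the constraint-graph / label-cover formulations in the PCP literature; either way the alphabet size stays constant. Since the constant gap is preserved throughout — none of these syntactic adjustments changes $\val(\mathcal{C})$ — and since the reduction from an $\NP$-complete language to the PCP instance and then to $\mathcal{C}$ is polynomial-time, we conclude that $(1,1/2)$-$\GapCSP$ is $\NP$-hard for these constant $q$ and $|\Sigma|$. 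I do not anticipate any real obstacle; this theorem is essentially a restatement of the PCP theorem in the notation fixed just above it, and the proof is a citation plus bookkeeping.
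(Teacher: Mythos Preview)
Your proposal is correct and matches the paper's approach: the paper does not prove this theorem at all but simply cites it as a known consequence of the PCP theorem (the citation \cite{dinur2007pcp} is in the theorem header itself, with no proof environment following). Your sketch of the standard PCP-to-CSP translation is exactly the folklore argument behind such a citation, and your observation that the only bookkeeping is normalizing to exact arity $q$ is on point.
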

Our goal in this section is to construct a protocol for $(1, \delta)$-$\GapCSP$ given any $(N,R,q,\Sigma)$-CSP system $\mathcal{C}$ where $N,R = \poly(n)$ and $q, |\Sigma| = O(1)$.
Let $\kappa := |\Sigma|^q$. We first outline the protocol. Arthur asks for a quantum state from $\mathbb{C}^R \otimes \mathbb{C}^{\kappa}$; we call the first register 
the \emph{constraint register} and the second register the \emph{color register}. 
A quantum proof has the following form:
\begin{align*}\ket{\psi} := \sum_{j\in [R], x \in \Sigma^q} a_{j, x} \ket{j}\ket{x}\end{align*}
For completeness, consider the satisfying assignment of  variables (in $[N]$) to values (in $\Sigma$). Merlin sends the quantum proof $\frac{1}{\sqrt{R}}\sum_{j\in[R]}  \ket{j} \ket{\vec{v}_j}$, where each $\vec{v}_j$ is the (ordered) list of values associated with the variables participating in $\calC_j$.

Arthur then applies one of two kinds of tests:
\begin{enumerate}
    \item \emph{Rigidity tests}: These ensure that the quantum proof is of the form $\frac{1}{\sqrt{R}}\sum_{j\in[R]}  \ket{j} \ket{\vec{v}_j}$. 
    \item \emph{Constraint tests}: These verify that values in the quantum proof satisfy constraints of the CSP system.
\end{enumerate}
Below, we separately describe the rigidity tests and constraint tests. In each, we analyze the success probability in completeness and prove lemmas to study soundness. We then combine the technical statements to prove the result.

\subsection{Rigidity tests}

Arthur enforces \emph{rigidity} of a quantum proof using two tests.
The first test is the \emph{Density test}, which maximizes $\ell_1$ norm.
Here, we measure the state in the Hadamard basis and \emph{accept} if the outcome is $\ket{+}$.\footnote{For simplicity, we denote the uniform superposition over all standard basis states by $\ket{+}$. The dimension is clear from the context.} Given $\ket{\psi}$, the success probability of this test is
\begin{align*}
    D(\ket{\psi}) = \left|\bra{+}\ket{\psi}\right|^2 = \frac{1}{\kappa R}\left|\sum_{j \in [R], x \in \Sigma^q} a_{j,x}\right|^2 = \frac{1}{\kappa R} (\|\ket{\psi}\|_1)^2\,.
\end{align*}
Recall that if $\ket{\psi}$ is a subset state according to subset $S$, its sparsity $\|\ket{\psi}\|_1$ is exactly $\sqrt{|S|}$.
In completeness, the quantum proof is a subset state with $R$ elements, so this test passes with probability $\frac{1}{\kappa}$.

The second test is the \emph{Validity test}, which minimizes $\ell_1$ norm \emph{only} on the second register.
Here, we measure the color register in the Hadamard basis, and \emph{reject} if the outcome is $\ket{+}$. Given $\ket{\psi}$, the success probability of this test is 
    \begin{align*}
        V(\ket{\psi}) = 1 - \bra{+}\Tr_{R}(\ket{\psi}\bra{\psi})\ket{+} = 1 - \frac{1}{\kappa} \sum_{j \in [R]} \left|\sum_{x \in \Sigma^q} a_{j,x}\right|^2\,,
    \end{align*}
where $\Tr_{R}$ is partial trace over the constraint register.
In completeness, recall that the proof has the form $\frac{1}{\sqrt{R}}\sum_{j \in [R]} \ket{j} \ket{\vec{v}_j}$, so the success probability is
\begin{align*}
    1 - \bra{+}
     \left( \frac{1}{R} \sum_{j \in [R]} \ketbra{\vec{v}_j}{\vec{v}_j} 
     \right) 
     \ket{+} = 1 - \frac{1}{\kappa}\,.
\end{align*}
In fact, no quantum state without relative phase can pass the \emph{Validity test} with a higher probability:
\begin{lemma}
\label{lemma:validity_upper_bound}
Suppose $\ket{\psi}$ has no relative phase. Then $V(\ket{\psi}) \le 1 - \frac{1}{\kappa}$.
\end{lemma}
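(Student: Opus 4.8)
The plan is to reduce the claim to the elementary inequality that, for a nonnegative real vector, the square of the sum of entries is at least the sum of squares of entries, and then invoke normalization of $\ket{\psi}$. Concretely, write $\ket{\psi} = \sum_{j \in [R], x \in \Sigma^q} a_{j,x}\ket{j}\ket{x}$ and recall
\begin{align*}
V(\ket{\psi}) = 1 - \frac{1}{\kappa}\sum_{j\in[R]}\left|\sum_{x\in\Sigma^q} a_{j,x}\right|^2,
\end{align*}
so it suffices to prove $\sum_{j\in[R]}\left|\sum_{x} a_{j,x}\right|^2 \ge 1$.

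First I would use the \emph{no relative phase} hypothesis: up to a global phase, we may assume every amplitude $a_{j,x}$ is a nonnegative real number, and multiplying $\ket{\psi}$ by a global phase changes neither $V(\ket{\psi})$ nor the $\ell_2$ norm. Then for each fixed $j$, expanding the square gives $\left(\sum_x a_{j,x}\right)^2 = \sum_x a_{j,x}^2 + \sum_{x\ne x'} a_{j,x}a_{j,x'} \ge \sum_x a_{j,x}^2$, since every cross term is a product of nonnegative reals and hence nonnegative. Summing this over $j \in [R]$ yields
\begin{align*}
\sum_{j\in[R]}\left|\sum_{x} a_{j,x}\right|^2 \;\ge\; \sum_{j\in[R]}\sum_{x\in\Sigma^q} a_{j,x}^2 \;=\; \big\|\ket{\psi}\big\|_2^2 \;=\; 1,
\end{align*}
which is exactly what we needed; substituting back gives $V(\ket{\psi}) \le 1 - \frac{1}{\kappa}$.

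I do not expect a genuine obstacle here: the only place the hypothesis is used is to guarantee the cross terms are nonnegative (for a general state with destructive interference the cross terms could be negative and the bound would fail, consistent with the discussion that any other Hadamard basis vector passes with probability $1$). The one subtlety worth stating carefully is the reduction from "no relative phase" to "nonnegative amplitudes," i.e. that the global phase is irrelevant to both quantities involved; this is immediate but should be noted explicitly for rigor.
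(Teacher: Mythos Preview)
Your proposal is correct and essentially identical to the paper's own proof: both expand $(\sum_x a_{j,x})^2$ into diagonal terms plus nonnegative cross terms, use $\sum_{j,x} a_{j,x}^2 = 1$, and conclude. The only cosmetic difference is that the paper carries the $1-\frac{1}{\kappa}(\cdots)$ expression through the computation while you first isolate the inequality $\sum_j(\sum_x a_{j,x})^2 \ge 1$ and then substitute; your added remark about the global phase is a fair point of rigor that the paper leaves implicit.
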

\begin{proof}
The success probability $V(\ket{\psi})$ is
\begin{align*}
         1 - \frac{1}{\kappa} \sum_{j \in [R]} (\sum_{x \in \Sigma^q} a_{j,x})^2
        &= 
        1 - \frac{1}{\kappa} 
        (\sum_{j \in [R]} \sum_{x,y \in \Sigma^q} a_{j,x} a_{j,y})
        &= 
        1 - \frac{1}{\kappa} 
        (1 + \sum_{j \in [R]} \sum_{x,y \in \Sigma^q; x \ne y} a_{j,x} a_{j,y})
         &\le 1 - \frac{1}{\kappa}\,,
\end{align*}
where the second equality follows from $\sum_{j,x} a_{j,x}^2 = 1$ and the inequality holds since $a_{j,x} \ge 0$.
\end{proof}
It turns out that is impossible to score high on both the \emph{Validity test} and the \emph{Density test}.
We use this to enforce the rigidity property of $\ket{\psi}$.
\begin{lemma} \label{lemma:density_validity}
    $D(\ket{\psi}) + V(\ket{\psi}) \le 1$.
\end{lemma}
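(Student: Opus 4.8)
The plan is to reduce the claimed inequality to a single application of Cauchy--Schwarz. Writing $\ket{\psi} = \sum_{j \in [R], x \in \Sigma^q} a_{j,x}\ket{j}\ket{x}$ and introducing the shorthand $b_j := \sum_{x \in \Sigma^q} a_{j,x}$ for each $j \in [R]$, the two success probabilities become
\begin{align*}
D(\ket{\psi}) = \frac{1}{\kappa R}\Bigl|\sum_{j \in [R]} b_j\Bigr|^2, \qquad
1 - V(\ket{\psi}) = \frac{1}{\kappa}\sum_{j \in [R]} |b_j|^2 .
\end{align*}
So the inequality $D(\ket{\psi}) + V(\ket{\psi}) \le 1$ is equivalent to $\frac{1}{R}\bigl|\sum_j b_j\bigr|^2 \le \sum_j |b_j|^2$.

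The second step is to observe that this is exactly Cauchy--Schwarz applied to the vectors $(b_j)_{j\in[R]}$ and the all-ones vector $\mathbf{1} \in \C^R$: $\bigl|\sum_{j} b_j\bigr|^2 = |\langle \mathbf{1}, (b_j)_j\rangle|^2 \le \|\mathbf{1}\|^2 \cdot \sum_j |b_j|^2 = R \sum_j |b_j|^2$. Dividing by $\kappa R$ gives $D(\ket{\psi}) \le \frac{1}{\kappa}\sum_j |b_j|^2 = 1 - V(\ket{\psi})$, which rearranges to the claim. I would then remark that, unlike \Cref{lemma:validity_upper_bound}, this step uses nothing about the amplitudes $a_{j,x}$ being real and non-negative — the bound holds for every normalized $\ket{\psi}$; the normalization $\sum_{j,x} a_{j,x}^2 = 1$ is not even needed here, only the structure of the two measurement probabilities.

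There is essentially no obstacle: the only thing to be careful about is matching the normalization factors ($\frac{1}{\kappa R}$ in $D$ against $\frac{1}{\kappa}$ in $1-V$), which is precisely what makes the extra factor of $R$ from Cauchy--Schwarz cancel. One could also phrase the same computation via the triangle inequality followed by the power-mean inequality $\bigl(\sum_j |b_j|\bigr)^2 \le R\sum_j |b_j|^2$, but the direct inner-product form is cleanest and makes the equality case (all $b_j$ equal) transparent, which is consistent with the completeness proof where the honest proof $\frac{1}{\sqrt R}\sum_j \ket{j}\ket{\vec v_j}$ achieves $D = \frac1\kappa$, $V = 1-\frac1\kappa$.
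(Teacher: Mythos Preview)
Your proof is correct and takes exactly the same approach as the paper: both apply Cauchy--Schwarz to the inner sums $b_j = \sum_x a_{j,x}$ against the all-ones vector to obtain $D(\ket{\psi}) \le 1 - V(\ket{\psi})$. Your additional remarks about non-negativity being unnecessary and the equality case are accurate and consistent with the paper's usage.
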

\begin{proof}
By Cauchy-Schwarz,
\begin{align*}
    D(\ket{\psi}) 
    = \frac{1}{\kappa R} \left| \sum_{j \in [R]} \sum_{x \in \Sigma^q} a_{j,x} \right|^2 \le \frac{1}{\kappa} \sum_{j \in [R]} \left|\sum_{x \in \Sigma^q} a_{j,x}\right|^2 = 1 - V(\ket{\psi})\,.\tag*{\qedhere} 
    \end{align*}
\end{proof}
Why does this help with rigidity?
Suppose Arthur inputs a quantum proof (without relative phase) $\ket{\psi}$ and runs \emph{Density test} with probability $p_1$ and \emph{Validity test} with probability $p_2$. Suppose also that $p_2 > p_1$. Then the expected success probability is $p_1 D(\ket{\psi}) + p_2 V(\ket{\psi}) \le p_1 + (p_2 - p_1) V(\ket{\psi}) \le p_1 + (p_2 - p_1)(1 - \frac{1}{\kappa})$. Note that this upper bound is achieved in completeness, and for any state of the form $\frac{1}{\sqrt{R}}\sum_{j \in R}\ket{j}\ket{\vec{v}_j}$. We show that quantum proofs must have this form to reach the upper bound.

One requirement to get close to the upper bound is near-optimal success probability on \emph{Validity test}. We prove that any quantum proof that has this property must be close to a state that assigns one color to each constraint.
\begin{lemma} \label{lemma:non-negative_valid}
Given $\ket{\psi} = \sum_{j,x} a_{j,x} \ket{j}\ket{x}$ with no relative phase (i.e. $a_{j,x} \ge 0$), let 
\begin{align*}
\gamma := \max_{\nu: [R] \to \Sigma^q} \sum_{j \in [R]} a_{j, \nu(j)}^2
\end{align*}
be associated with maximizing function $\sigma$, and let  $\ket{\phi} := \frac{1}{\sqrt{\gamma}} \sum_{j} a_{j,\sigma(j)} \ket{j}\ket{\sigma(j)}$. Fix any $d \ge 0$. If $V(\ket{\psi}) =1 - \frac{1 + d}{\kappa}$, then $\left|\bra{\psi}\ket{\phi}\right|^2 \ge 1 - d$. 
\end{lemma}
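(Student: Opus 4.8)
The plan is to first evaluate the overlap $\bra{\psi}\ket{\phi}$ exactly and reduce the claim to a statement about $\gamma$ alone. Since $\ket{\phi} = \frac{1}{\sqrt{\gamma}}\sum_j a_{j,\sigma(j)}\ket{j}\ket{\sigma(j)}$ is just the ``heaviest color per constraint'' restriction of $\ket{\psi}$, renormalized, and $\ket{\psi}$ has amplitude $a_{j,\sigma(j)}$ on each basis vector $\ket{j}\ket{\sigma(j)}$, we get $\bra{\psi}\ket{\phi} = \frac{1}{\sqrt{\gamma}}\sum_{j\in[R]} a_{j,\sigma(j)}^2 = \sqrt{\gamma}$. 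So it suffices to prove $\gamma \ge 1-d$.

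Next I would set up per-block quantities. For each $j \in [R]$, write $b_j := \sum_{x\in\Sigma^q} a_{j,x}$ (the $\ell_1$ mass of block $j$), $s_j := \sum_{x\in\Sigma^q} a_{j,x}^2$ (its $\ell_2^2$ mass), and $m_j := \max_{x} a_{j,x} = a_{j,\sigma(j)}$, so that $\gamma = \sum_j m_j^2$. Normalization of $\ket{\psi}$ gives $\sum_j s_j = 1$, and plugging the hypothesis $V(\ket{\psi}) = 1 - \frac{1+d}{\kappa}$ into the formula $V(\ket{\psi}) = 1 - \frac{1}{\kappa}\sum_j b_j^2$ gives $\sum_j b_j^2 = 1+d$.

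The one place non-negativity is used is the elementary inequality $s_j \le m_j b_j$: since $0 \le a_{j,x} \le m_j$ for every $x$, we have $\sum_x a_{j,x}^2 \le m_j \sum_x a_{j,x}$. Restricting attention to the blocks with $b_j > 0$ (a block with $b_j = 0$ has $s_j = m_j = 0$ and contributes nothing to any of the three sums), this yields $m_j^2 \ge s_j^2/b_j^2$, hence $\gamma \ge \sum_j s_j^2/b_j^2$. Now apply Cauchy--Schwarz to the factorization $s_j = (s_j/b_j)\cdot b_j$:
\[
1 = \Bigl(\sum_j s_j\Bigr)^2 \le \Bigl(\sum_j \frac{s_j^2}{b_j^2}\Bigr)\Bigl(\sum_j b_j^2\Bigr) \le \gamma\,(1+d),
\]
so $\gamma \ge \frac{1}{1+d} \ge 1-d$, which combined with $|\bra{\psi}\ket{\phi}|^2 = \gamma$ finishes the proof.

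I do not expect a genuine obstacle here; the argument is short once the right inequality is spotted. The only things to handle with care are the degenerate blocks $b_j = 0$ (harmless, as noted) and the bookkeeping that turns the hypothesis on $V$ into the constraint $\sum_j b_j^2 = 1+d$. The ``insight'' step — and the only nonroutine one — is recognizing that the per-block bound $s_j \le m_j b_j$ chains with Cauchy--Schwarz against the two global quantities $\sum_j s_j = 1$ and $\sum_j b_j^2 = 1+d$ to control $\gamma$.
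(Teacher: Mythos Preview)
Your proof is correct, but it takes a genuinely different route from the paper. Both arguments first reduce to showing $\gamma \ge 1-d$ via the identity $|\bra{\psi}\ket{\phi}|^2 = \gamma$. From there the paper expands the cross terms: writing $d = \sum_{j}\sum_{x\ne y} a_{j,x}a_{j,y}$ (this is the content of the proof of \Cref{lemma:validity_upper_bound}), it keeps only the pairs with $x=\sigma(j)$ and uses $a_{j,\sigma(j)}\ge a_{j,y}$ to get $d \ge \sum_j \sum_{y\ne\sigma(j)} a_{j,y}^2 = 1-\gamma$ directly. Your argument instead packages the block data as $s_j \le m_j b_j$ and runs Cauchy--Schwarz against the two global constraints $\sum_j s_j=1$ and $\sum_j b_j^2 = 1+d$, obtaining the slightly sharper intermediate bound $\gamma \ge \tfrac{1}{1+d}$ before relaxing to $1-d$. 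The paper's route is a hair shorter and avoids the $b_j=0$ caveat; yours yields a tighter inequality along the way and makes the role of Cauchy--Schwarz explicit. Either is perfectly adequate here.
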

\begin{proof}
   Note that for all $x \in \Sigma^q$, $a_{j,\sigma(j)} \ge a_{j,x}$; otherwise, $\sigma$ is not maximizing. Using the proof of \Cref{lemma:validity_upper_bound},
    \begin{align*}
      d = \sum_{j \in [R]} \sum_{x, y \in \Sigma^q; x \ne y} a_{j,x} a_{j,y} \ge \sum_{j \in [R]} \sum_{y \in \Sigma^q; y \ne \sigma(j)} a_{j,\sigma(j)} a_{j,y} \ge \sum_{j \in [R]} \sum_{y \in \Sigma^q; y \ne \sigma(j)} a_{j,y}^2\,.
    \end{align*}
    So then,
    \begin{align*}
        \gamma = \sum_{j \in [R]} a_{j,\sigma(j)}^2 = (\sum_{j \in [R]} \sum_{x \in \Sigma^q} a_{j,x}^2) -  (\sum_{j \in [R]} \sum_{x \in \Sigma^q; x \ne \sigma(j)} a_{j,x}^2) \ge 1 - d\,.
    \end{align*}
    So $\left|\bra{\psi}\ket{\phi}\right|^2 = (\frac{1}{\sqrt{\gamma}} \sum_{j} a_{j,\sigma(j)}^2)^2 = \gamma \ge 1 - d$.
\end{proof}

Another requirement to get close to the upper bound is near-optimal success probability on \emph{Density test}, up to \Cref{lemma:density_validity}. Consider any quantum proof that passes \emph{Validity test} with probability close to $1 - \frac{1}{\kappa}$ and \emph{Density test} with probability close to $\frac{1}{\kappa}$; we show it must be close to a state of the form $\frac{1}{\sqrt{R}}\sum_{j \in R}\ket{j}\ket{\vec{v}_j}$. Now we can prove the soundness of the rigidity test by relying on the following fact:
\begin{fact} \label{fact:preserv}
Let $0\leq \Pi \leq \mathbb{I}$ be a positive semi-definite matrix, and let $\ket{\psi_1}$ and $\ket{\psi_2}$ be quantum states such that $\left|\braket{\psi_1}{\psi_2}\right|^2 \geq 1-d$. Then $\left|\bra{\psi_1}\Pi\ket{\psi_1} - \bra{\psi_2}\Pi\ket{\psi_2} \right| \leq \sqrt{d}$.
\end{fact}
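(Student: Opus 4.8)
The plan is to reduce the claim to the standard relationship between fidelity and trace distance for pure states, together with the variational characterization of trace distance by two-outcome measurements. First, rewrite the quantity of interest as $\abs{\Tr(\Pi\Delta)}$, where $\Delta := \ketbra{\psi_1}{\psi_1} - \ketbra{\psi_2}{\psi_2}$, using $\bra{\psi_i}\Pi\ket{\psi_i} = \Tr(\Pi\ketbra{\psi_i}{\psi_i})$. The operator $\Delta$ is Hermitian, traceless, and supported on the subspace $W := \spn\{\ket{\psi_1}, \ket{\psi_2}\}$, which has dimension at most two.

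Next I would diagonalize $\Delta$ on $W$. Choosing an orthonormal basis $\{\ket{0}, \ket{1}\}$ of $W$ with $\ket{\psi_1} = \ket{0}$ and $\ket{\psi_2} = \cos\theta\,\ket{0} + e^{i\varphi}\sin\theta\,\ket{1}$ for suitable $\theta,\varphi$, so that $\abs{\braket{\psi_1}{\psi_2}}^2 = \cos^2\theta$, a direct $2\times 2$ computation shows $\Delta$ has eigenvalues $\pm\abs{\sin\theta} = \pm\sqrt{1 - \abs{\braket{\psi_1}{\psi_2}}^2}$ and all remaining eigenvalues zero. Hence $\|\Delta\|_1 = 2\sqrt{1 - \abs{\braket{\psi_1}{\psi_2}}^2} \le 2\sqrt{d}$ by hypothesis.

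Finally I would apply the elementary bound that for any Hermitian traceless $\Delta$ and any $0 \le \Pi \le \mathbb{I}$, $\abs{\Tr(\Pi\Delta)} \le \tfrac12\|\Delta\|_1$: writing $\Delta = \sum_i \lambda_i \ketbra{e_i}{e_i}$ in its eigenbasis and noting $0 \le \bra{e_i}\Pi\ket{e_i} \le 1$, the quantity $\sum_i \lambda_i\bra{e_i}\Pi\ket{e_i}$ lies between $\sum_{\lambda_i<0}\lambda_i$ and $\sum_{\lambda_i>0}\lambda_i$, each of which has magnitude $\tfrac12\|\Delta\|_1$ since $\sum_i\lambda_i = 0$. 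Combining the two estimates gives $\abs{\bra{\psi_1}\Pi\ket{\psi_1} - \bra{\psi_2}\Pi\ket{\psi_2}} \le \tfrac12 \cdot 2\sqrt{d} = \sqrt{d}$.

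There is no substantial obstacle here; the only point requiring care is the $2\times 2$ eigenvalue computation and cleanly absorbing the global phase of $\ket{\psi_1}$ (irrelevant to $\ketbra{\psi_1}{\psi_1}$), which is why it is convenient to parametrize $\ket{\psi_2}$ relative to $\ket{\psi_1}$ inside an orthonormal basis of $W$. Alternatively, one may skip the explicit diagonalization and simply invoke the textbook identity $\|\ketbra{\psi_1}{\psi_1} - \ketbra{\psi_2}{\psi_2}\|_1 = 2\sqrt{1 - \abs{\braket{\psi_1}{\psi_2}}^2}$.
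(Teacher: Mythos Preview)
Your proof is correct and follows essentially the same approach as the paper: both rewrite the quantity as $\abs{\Tr(\Pi(\ketbra{\psi_1}-\ketbra{\psi_2}))}$, bound it by the trace distance, and use the pure-state identity $\tfrac12\|\ketbra{\psi_1}-\ketbra{\psi_2}\|_1 = \sqrt{1-\abs{\braket{\psi_1}{\psi_2}}^2}$. You simply unpack the two textbook facts (the $2\times 2$ eigenvalue computation and the variational bound $\abs{\Tr(\Pi\Delta)}\le \tfrac12\|\Delta\|_1$) that the paper invokes without proof.
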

\begin{proof}
    The quantity $\left|\bra{\psi_1}\Pi\ket{\psi_1} - \bra{\psi_2} \Pi\ket{\psi_2} \right| 
    = \left|\Tr(\Pi \left( \ketbra{\psi_1} - \ketbra{\psi_2} \right))\right|$
    is upper-bounded by the trace distance of $\ketbra{\psi_1}$ and $\ketbra{\psi_2}$, which has value $\sqrt{1 - \left|\bra{\psi_1}\ket{\psi_2}\right|^2} \le \sqrt{d}$.
\end{proof}
\begin{lemma}[Rigidity lemma] \label{lemma:validform}
    Let $d_2 \geq d_1 \geq 0$ be small constants. Suppose $\ket{\psi}$, $\ket{\phi}$, and $\sigma$ are defined as in \Cref{lemma:non-negative_valid}, and $\ket{\chi}$ is defined as
    \begin{align*}\ket{\chi} := \frac{1}{\sqrt{R}} \sum_{j \in [R]} \ket{j}\ket{\sigma(j)}\,.
    \end{align*}
    If $D(\ket{\psi}) = \frac{1}{\kappa} - d_1$ and $V(\ket{\psi}) = 1- \frac{1}{\kappa} - d_2$, then $\left|\braket{\chi}{\psi}\right|^2 \geq 1- \kappa d_1 - (\kappa + 1) \sqrt{\kappa \cdot d_2}$.
\end{lemma}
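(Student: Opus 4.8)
The goal is to combine the two near-optimality conditions into a single bound on $|\braket{\chi}{\psi}|^2$. The natural route is a two-step triangle-inequality-style argument passing through the intermediate state $\ket{\phi}$: first show $\ket{\psi}$ is close to $\ket{\phi}$ (this is essentially \Cref{lemma:non-negative_valid} applied with $d = \kappa d_2$, since $V(\ket{\psi}) = 1 - \frac{1+\kappa d_2}{\kappa}$ forces $|\braket{\psi}{\phi}|^2 \ge 1 - \kappa d_2$), then show $\ket{\phi}$ is close to $\ket{\chi}$ using the \emph{Density test} value, and finally chain the two closeness bounds. The key observation making the second step work is that $\ket{\phi}$ and $\ket{\chi}$ have the \emph{same support} $\{(j,\sigma(j)) : j \in [R]\}$: $\ket{\phi} = \frac{1}{\sqrt\gamma}\sum_j a_{j,\sigma(j)}\ket{j}\ket{\sigma(j)}$ and $\ket{\chi} = \frac{1}{\sqrt R}\sum_j \ket{j}\ket{\sigma(j)}$, so $\ket{\chi}$ is exactly the "flattened" version of $\ket{\phi}$ — and $\ket{\chi}$ is the unique subset state on that support.

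**Executing the steps.** First I would transfer the \emph{Density test} value from $\ket{\psi}$ to $\ket{\phi}$: since $D(\cdot) = |\braket{+}{\cdot}|^2$ is of the form $\bra{\cdot}\Pi\ket{\cdot}$ for the projector $\Pi = \ketbra{+}{+}$, \Cref{fact:preserv} with the bound $|\braket{\psi}{\phi}|^2 \ge 1 - \kappa d_2$ gives $D(\ket{\phi}) \ge D(\ket{\psi}) - \sqrt{\kappa d_2} = \frac1\kappa - d_1 - \sqrt{\kappa d_2}$. Now I compute $D(\ket{\phi})$ directly: $\braket{+}{\phi} = \frac{1}{\sqrt{\kappa R}}\cdot\frac{1}{\sqrt\gamma}\sum_j a_{j,\sigma(j)}$, so $D(\ket{\phi}) = \frac{1}{\kappa R \gamma}\big(\sum_j a_{j,\sigma(j)}\big)^2$. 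On the other hand $|\braket{\chi}{\phi}|^2 = \frac{1}{R\gamma}\big(\sum_j a_{j,\sigma(j)}\big)^2 = \kappa \cdot D(\ket{\phi})$. Combining, $|\braket{\chi}{\phi}|^2 \ge 1 - \kappa d_1 - \kappa\sqrt{\kappa d_2}$.

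**Chaining.** Now I have two states close to $\ket{\phi}$: $\ket{\psi}$ with $|\braket{\psi}{\phi}|^2 \ge 1 - \kappa d_2$, and $\ket{\chi}$ with $|\braket{\chi}{\phi}|^2 \ge 1 - \kappa d_1 - \kappa\sqrt{\kappa d_2}$. To conclude I would use the fact that fidelity (equivalently, $\arccos$ of the overlap, which is a metric on pure states up to phase) satisfies a triangle inequality: if $|\braket{a}{b}|^2 \ge 1 - \alpha$ and $|\braket{b}{c}|^2 \ge 1 - \beta$ then $|\braket{a}{c}|^2 \ge 1 - (\sqrt\alpha + \sqrt\beta)^2 \ge 1 - 2\alpha - 2\beta$, or more carefully $1 - \alpha - \beta - 2\sqrt{\alpha\beta}$. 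Applying this with $\alpha = \kappa d_2$ and $\beta = \kappa d_1 + \kappa\sqrt{\kappa d_2}$ and bounding the cross term crudely, I should land at $|\braket{\chi}{\psi}|^2 \ge 1 - \kappa d_1 - (\kappa+1)\sqrt{\kappa d_2}$ after absorbing lower-order terms (using that $d_1, d_2$ are small constants, so $\sqrt{\kappa d_2}$ dominates $d_2$ and $\sqrt{d_1 d_2}$-type terms).

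**Anticipated obstacle.** The main delicacy is the bookkeeping in the final chaining step: getting exactly the claimed constant $(\kappa+1)\sqrt{\kappa d_2}$ rather than something slightly worse requires being careful about which cross terms are absorbed into which, and possibly using $\sqrt{\kappa d_2} \ge \kappa d_2$ and $\sqrt{ab} \le \frac12(a+b)$ judiciously rather than the lossy $(\sqrt\alpha+\sqrt\beta)^2 \le 2\alpha + 2\beta$ bound. It's also worth double-checking that $\gamma \le 1$ (clear, since $\gamma = \sum_j a_{j,\sigma(j)}^2 \le \sum_{j,x} a_{j,x}^2 = 1$) is used correctly so that the step $|\braket{\chi}{\phi}|^2 = \kappa D(\ket{\phi})$ and the subsequent inequality point the right way; in particular $\ket{\phi}$ being a genuine unit vector relies on $\gamma > 0$, which holds whenever $d_2 < 1/\kappa$. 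If the direct chaining is too lossy, an alternative is to bound $\|\ket{\psi} - \ket{\chi}\|$ directly by writing $\ket{\psi} - \ket{\chi} = (\ket{\psi} - \ket{\phi}) + (\ket{\phi} - \ket{\chi})$, controlling each summand's norm by the respective infidelity, and converting back — but this runs into the usual phase-alignment caveats, so the fidelity-triangle-inequality route is cleaner.
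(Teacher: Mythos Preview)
Your proposal is essentially the paper's proof: the first three steps (apply \Cref{lemma:non-negative_valid} to get $|\braket{\psi}{\phi}|^2 \ge 1-\kappa d_2$; transfer $D$ from $\ket{\psi}$ to $\ket{\phi}$ via \Cref{fact:preserv} with $\Pi=\ketbra{+}$; observe $|\braket{\chi}{\phi}|^2 = \kappa\,D(\ket{\phi})$) match exactly. The only divergence is the final chaining step. You propose the fidelity triangle inequality, which works but forces you into the bookkeeping you anticipate. The paper instead reuses \Cref{fact:preserv} a \emph{second} time, now with $\Pi=\ketbra{\chi}$: since $|\braket{\psi}{\phi}|^2 \ge 1-\kappa d_2$, one gets directly $|\braket{\chi}{\psi}|^2 \ge |\braket{\chi}{\phi}|^2 - \sqrt{\kappa d_2} \ge 1 - \kappa d_1 - \kappa\sqrt{\kappa d_2} - \sqrt{\kappa d_2}$, which is exactly the claimed bound with no lower-order terms to absorb. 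So the ``anticipated obstacle'' disappears if you notice that the same transfer trick you used for $\ketbra{+}$ applies verbatim to $\ketbra{\chi}$.
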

\begin{proof}
By \cref{lemma:non-negative_valid}, we know that $\left|\braket{\psi}{\phi}\right|^2 \geq 1 - \kappa\cdot d_2$. So by \cref{fact:preserv}, for any quantum state $\ket{\mu}$, $\left|\left|\braket{\mu}{\phi}\right|^2 - \left|\braket{\mu}{\psi}\right|^2\right| \leq \sqrt{\kappa \cdot d_2}$. We use this in two places. First, when $\ket{\mu} = \ket{+}$. Since $D(\ket{\psi}) = \left|\braket{+}{\psi} \right|^2 = \frac{1}{\kappa} - d_1$,  we have $\left|\braket{+}{\phi}\right|^2 \geq \frac{1}{\kappa} - d_1 - \sqrt{\kappa \cdot d_2}$ by triangle inequality. Second, when $\ket{\mu} = \ket{\chi}$. Notice that
\begin{align*}
\left|\braket{\chi}{\phi}\right|^2 = \kappa \left|\braket{+}{\phi}\right|^2 \geq 1 - \kappa(d_1 + \sqrt{\kappa\cdot d_2})\,.\end{align*}
Again by triangle inequality,
\begin{align*}
    \left|\braket{\chi}{\psi}\right|^2 \geq 1- \kappa (d_1 + \sqrt{\kappa \cdot d_2}) - \sqrt{\kappa \cdot d_2}\,.\tag*{\qedhere} 
\end{align*}
\end{proof}
Intuitively, \Cref{lemma:validform} allows us to \emph{tune} the probability of each test in the $\NP$ protocol. As we explain in the analysis (\Cref{sec:analysis-np}), if the probabilities of running \emph{Validity test} and \emph{Density test} are much higher than that for constraint tests, then if $d_1$ or $d_2$ is large, these two tests catch a ``deceptive'' quantum proof in soundness. This allows constraint tests to focus on the case of small $d_1$ and $d_2$; i.e. nearly \emph{rigid} quantum proofs.
\subsection{Constraint tests}
We analyze the constraint tests on \emph{rigid} quantum proofs, i.e. states of the form $\ket{\psi} = \frac{1}{\sqrt{R}} \sum_{j \in [R} \ket{j} \ket{\vec{v}_j}$. The verifier needs to check two properties:
\begin{enumerate}[label=(\roman*)]
    \item \label{cons1} (\emph{satisfiability}) For all $j\in [R]$, the assignment $\vec{v}_j$ satisfies $\calC_j$.
    \item \label{cons2} (\emph{consistency}) Each variable is assigned the same value when participating in different constraints.
\end{enumerate}
One may ask why we even need to check for \emph{consistency}. Couldn't we ask for the assignment of each variable $a: [N] \to \Sigma$, for example as the quantum proof $\frac{1}{\sqrt{N}} \sum_{i \in [N]} \ket{i}\ket{a(i)}$? The problem with this is checking \emph{satisfiability} becomes difficult, since the assigned values are given in superposition.\footnote{There is a way around this limitation for CSP systems consisting of \emph{unique game constraints}, where each (binary) constraint involving variables $i_1, i_2$ accepts exactly one $a(i_2)$ for each $a(i_1)$. See~\cite[Section 6]{qma2plus} for more discussion.}

Instead, with a state $\frac{1}{\sqrt{R}} \sum_{j \in [R} \ket{j} \ket{\vec{v}_j}$, \emph{satisfiability} is easy to verify: measure the first register (observing some $\ket{j}\ket{\vec{v}_j}$), and compute $\mathcal{C}_j(\vec{v}_j)$.
Let $u_s$ be the number of unsatisfied constraints. The outcome is $1$ with probability $1 - \frac{u_s}{R}$.

But this form of quantum proof gives Merlin a new way to ``deceive'': for a given variable, send different values depending on the constraint! We prevent this by checking for \emph{consistency}, similarly to the pre-processing step of \cite{dinur2007pcp} sometimes called \emph{regularization}. As in \cite[Section 7]{qma2plus}, we add ``consistency constraints'' to the CSP system $\mathcal{C}$ as follows:\footnote{Note that since $N$ and $R$ are polynomially-sized, this process is efficient.}
\begin{itemize}
    \item For each variable $i \in [N]$, let $V_i$ represent the constraints that $i$ participates in.
    \item Fix a constant $d$. For each $i \in [N]$, draw a $d$-regular graph with vertices $V_i$ that is \emph{expanding}.\footnote{For technical reasons of \Cref{claim:soundness-np}, we require that the Cheeger constant is at least 2.}
    \item Each edge $(j_1, j_2)$ of each expander $V_i$ represents a ``consistency constraint'', where we assert that the value of variable $i$ sent with constraint $j_1$ equals that sent with constraint $j_2$.
\end{itemize}
Using \emph{expander} graphs allows us to prevent this kind of ``deceptive'' Merlin: either the proof fails many of the original constraints, or it fails many ``consistency constraints''. Let $u_e$ be the number of unsatisfied ``consistency constraints'' out of $Rq \cdot \frac{d}{2}$:
\begin{claim}[{\cite[Lemma 4.1]{dinur2007pcp}}]\label{claim:soundness-np}
Consider a $(N,R,q,\Sigma)$-CSP system $\mathcal{C}$, and apply \emph{regularization}. If $\val(\calC) = 1$, then all ``consistency constraints'' can be simultaneously satisfied. If $\val(\calC) \le \delta$, then the total number of unsatisfied constraints $(u_s+u_e)$ is at least $(1-\delta) R$.
\end{claim}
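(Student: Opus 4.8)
The plan is to reproduce the standard ``gap preservation under regularization'' argument of Dinur~\cite{dinur2007pcp}, tracking the bookkeeping that matches $u_s$ and $u_e$ here. For completeness, if $\val(\mathcal{C}) = 1$ we fix a satisfying assignment $a : [N] \to \Sigma$ and define the proof by letting $\vec{v}_j$ be the list of $a$-values of the variables of $\mathcal{C}_j$. Every original constraint $\mathcal{C}_j$ is then satisfied because $a$ satisfies it, and every consistency constraint on an edge $(j_1,j_2)$ of the expander $V_i$ is satisfied because the value of $i$ recorded in both $\vec{v}_{j_1}$ and $\vec{v}_{j_2}$ equals $a(i)$. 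Hence $u_s = u_e = 0$, so in particular all consistency constraints are simultaneously satisfiable.

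For soundness, suppose $\val(\mathcal{C}) \le \delta$ and fix any assignment $\{\vec{v}_j\}_{j \in [R]}$ of the regularized system. For each variable $i$, use the values $\vec{v}_j(i)$ for $j \in V_i$ as a coloring of $V_i$, let $a(i)$ be a plurality color with class $S_{i,a(i)}$ of size $m_i := \max_v |S_{i,v}|$, and set $W_i := V_i \setminus S_{i,a(i)}$, so $|W_i| = |V_i| - m_i$. Viewing $a : [N] \to \Sigma$ as an assignment of $\mathcal{C}$: by hypothesis it satisfies at most $\delta R$ constraints, so the set $F$ of constraints of $\mathcal{C}$ unsatisfied by $a$ has $|F| \ge (1-\delta)R$. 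Partition $F = F_1 \sqcup F_2$, where $F_1$ consists of those $j$ for which $\vec{v}_j$ also fails $\mathcal{C}_j$ (so $|F_1| \le u_s$) and $F_2$ is the rest. For each $j \in F_2$, since $\vec{v}_j$ satisfies $\mathcal{C}_j$ but the induced $a$-values do not, there is a variable $i$ of $\mathcal{C}_j$ with $\vec{v}_j(i) \ne a(i)$; pick one such witness $i(j)$, and set $B_i := \{ j \in F_2 : i(j) = i\}$. Then $F_2 = \bigsqcup_i B_i$ and each $B_i \subseteq W_i$, so $|F_2| = \sum_i |B_i| \le \sum_i |W_i|$, and distinct variables' consistency constraints are genuinely distinct edges, so $u_e = \sum_i u_e^{(i)}$ where $u_e^{(i)}$ counts the bichromatic edges of the expander on $V_i$.

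The heart of the argument is a ``robust equality on an expander'' bound: $u_e^{(i)} \ge |W_i|$ for every $i$. Write the bichromatic edge count as $\tfrac{1}{2}\sum_v e(S_{i,v}, V_i \setminus S_{i,v})$ and apply the edge-expansion hypothesis (Cheeger constant $\ge 2$, i.e.\ $e(S, V_i \setminus S) \ge 2|S|$ whenever $|S| \le |V_i|/2$): if no color class exceeds $|V_i|/2$ this gives $u_e^{(i)} \ge \sum_v |S_{i,v}| = |V_i| \ge |W_i|$; if one class does exceed $|V_i|/2$ it must be $S_{i,a(i)}$, and bounding its boundary by $2|W_i|$ and the remaining classes' boundaries by $2|S_{i,v}|$ (which sum to $2|W_i|$) gives $u_e^{(i)} \ge 2|W_i|$. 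Summing over $i$ yields $u_e \ge \sum_i |W_i| \ge |F_2|$, and together with $u_s \ge |F_1|$ we get $u_s + u_e \ge |F_1| + |F_2| = |F| \ge (1-\delta)R$, as claimed.

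I expect the only real subtlety to be this last step, specifically the case where $V_i$ has no majority color class: summing the Cheeger inequality over \emph{all} color classes rather than over a single distinguished set is what makes a Cheeger constant of $2$ already suffice. Everything else is careful accounting — ensuring each constraint of $\mathcal{C}$ unsatisfied by the plurality assignment is charged exactly once (either to $u_s$ or to one witness variable's consistency expander), and that $u_e$ decomposes as a sum over variables.
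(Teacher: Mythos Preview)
Your proposal is correct and is precisely the standard plurality-decoding argument from \cite[Lemma~4.1]{dinur2007pcp} that the paper cites without reproducing; the only nonstandard wrinkle here is tracking why Cheeger constant $\ge 2$ (rather than merely bounded spectral gap) makes the bookkeeping come out cleanly, and you handle both the majority and no-majority cases correctly. One cosmetic point: when you say ``bounding its boundary by $2|W_i|$'' you mean bounding it \emph{from below}, via Cheeger applied to the smaller side $W_i$ --- worth stating explicitly so the direction of the inequality is unambiguous.
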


How do we check these ``consistency constraints''?
Over the next few paragraphs, we construct a unitary related to permutations on the constraint graph. In completeness, the quantum proof is an eigenvector of this unitary, but in soundness, all \emph{rigid} quantum proofs are detectably far (i.e. using a Hadamard test) from an eigenvector.
We study the graph $\tilde{G}$ with $R \cdot q$ vertices, where each vertex $(j,i)$ corresponds to a clause $j$ and a variable $i$ that participates in clause $j$. Let $\tilde{G}$ be the union of all consistency edges created during regularization, i.e. $(j_1, j_2)$ for variable $i$ becomes the edge $((j_1,i), (j_2,i))$. Note that $\tilde{G}$ contains a copy of each expander graph, so it is $d$-regular.

We now choose $d$ permutations.
It is a classical fact that the adjacency matrix of any $d$-regular graph can always be decomposed to $d$ permutations. Let $\pi_1, \ldots, \pi_d$ be the decomposition of $\tilde{G}$; recall that these are permutations on $V(\tilde{G})$ where $|V(\tilde{G})| = R \cdot q$.
For each $k \in [d]$, we identify $\pi_k$ with a permutation on $[R] \times [N]$, where any $(j,i) \in [R] \times [N]$ that is not a vertex of $\tilde{G}$ (i.e. variable $i$ does not participate in constraint $j$) is mapped to itself.\footnote{These permutations (and their inverses) are all efficient because $N$ and $R$ are polynomially-sized.}
Note that this map always preserves the variable $i \in [N]$; without loss of generality, we also identify $\pi_k$ with its restriction $[R] \times [N] \to [R]$. From here on, we use this last definition of $\pi_k$, which maps constraint $j_1$ that variable $i$ participates in to another constraint $j_2$ that variable $i$ participates in, and identity otherwise.

Now consider a \emph{rigid} quantum proof, i.e. of the form $\ket{\psi} = \frac{1}{\sqrt{R}} \sum_{j \in [R]} \ket{j} \ket{\vec{v}_j}$. Since there are a polynomial number of variables and constraints, we can efficiently transform $\ket{\psi}$ to $\ket{\phi'}$, where
\begin{align*}
    \ket{\phi'} := \frac{1}{\sqrt{q\cdot  R}}\sum_{j\in [R]} \sum_{i \in \calC_j} \ket{j}\ket{\vec{v}_j}\ket{i}\ket{v_j(i)}\,.
\end{align*}
Here, $i \in \calC_j$ are the variables participating in $\calC_j$, and $v_j(i)$ is the value of this variable according to $\vec{v}_j$. 

We now would like to construct a unitary on $\ket{\phi'}$ that maps $\ket{j}\ket{i}\ket{value}$ to $\ket{j'}\ket{i}\ket{value}$ for some other constraint $j'$ that $i$ participates in. In completeness, this unitary would leave the state unchanged.
Notice that from the perspective of such a unitary, the second register containing $\ket{\vec{v}_j}$ is ``junk''.
Fortunately, we can measure out the second register in the Hadamard basis, and reject if the outcome is \emph{not} $\ket{+}$.
All \emph{rigid} states will observe outcome $\ket{+}$ with probability $\frac{1}{\kappa}$; one can see this by writing the second register in the Hadamard basis.

Suppose the observed outcome is $\ket{+}$; let us call the postselected state $\ket{\phi}$, where
\begin{align*}
\ket{\phi} := \frac{1}{\sqrt{q\cdot R}}\sum_{j \in R} \sum_{i \in \calC_j} \ket{j}\ket{i} \ket{v_j(i)}\,.
\end{align*}
For each $k \in [d]$, we now implement the in-place transformation $\Pi_k$ according to $\pi_k: [R] \times [N] \to [R]$, where
\begin{align*}
\Pi_k: \ket{j}\ket{i}\ket{v_j(i)} \rightarrow \ket{\pi_k(j,i)}\ket{i}\ket{v_j(i)}\,.
\end{align*}
Recall that the map $(j,i) \mapsto (\pi_k(j,i),i)$ is a permutation. Since we have access both to this permutation and its inverse, we can implement $\Pi_k$.

Note that in a satisfiable instance, $\Pi_k \ket{\phi} = \ket{\phi}$. By contrast, if $v_j(i) \neq v_{j'}(i)$, $\ket{j'}\ket{i}\ket{v_j(i)}$ is orthogonal to $\ket{\phi}$. Hence, $\bra{\phi}\Pi_k \ket{\phi}$ is the fraction of satisfied ``consistency constraints'' observed by $\pi_k$. We use the \emph{Hadamard test} to measure this value, in a similar way to the \emph{Spectral test} in~\cite{bassirian2022power}.
Note that unlike the swap test, the Hadamard test only uses one copy of a quantum state.
\begin{definition}[Hadamard test]
    Let $\ket{\psi}$ be a quantum state and $U$ a unitary operator.
    \begin{enumerate}
        \item Prepend a control qubit to $\ket{\psi}$, to create $\ket{0}\ket{\psi}$.
        \item Apply a Hadamard on the control qubit, to create $\frac{1}{\sqrt{2}}(\ket{0} + \ket{1}) \ket{\psi}$.
        \item Apply $U$, controlled by the control qubit, to create $\frac{1}{\sqrt{2}}\left( \ket{0}\ket{\psi} + \ket{1}U\ket{\psi} \right)$.
        \item Apply a Hadamard on the control qubit, to create $\frac{1}{2}\ket{0}\left(\ket{\psi} + U\ket{\psi} \right) + \frac{1}{2} \ket{1} \left(\ket{\psi} - U\ket{\psi} \right)$.
        \item Measure the control qubit, and accept if the output is $0$. 
    \end{enumerate}
    The success probability is then
    \begin{align*}
        \frac{1}{4} \left \| \ket{\psi} + U \ket{\psi} \right \|^2 = \frac{1}{2} + \frac{1}{4} \bra{\psi} U + U^\dagger \ket{\psi} = \frac{1}{2} + \frac{\Re\bra{\psi} U \ket{\psi}}{2} \,.
    \end{align*}
\end{definition}
We now can describe the constraint tests together:
\begin{enumerate}[label=(\roman*)]
\item \label{e:satisfied} With probability $\frac{1}{qd\kappa+1}$, check \emph{satisfiability}. This succeeds with probability $1-\frac{u_s}{R}$.
\item \label{e:consistent} With probability $\frac{qd\kappa}{qd\kappa+1}$, generate $\ket{\phi'}$, and measure the second register in the Hadamard basis. If the output state is not $\ket{+}$, reject. Otherwise, choose a random $k \in [d]$, and perform a Hadamard test with $\Pi_k$. This succeeds with probability $\frac{1}{\kappa}(\frac{1}{2} + \frac{1}{2}\mathbb{E}_k[\Re \bra{\phi}\Pi_k \ket{\phi}]) = \frac{1}{\kappa}(1- \frac{u_e}{qdR})$.\footnote{Note that in our protocol, $\bra{\phi}\Pi_k \ket{\phi}$ is always real because $\ket{\phi}$ and $\Pi_k$ have real values.}
\end{enumerate}
The overall success probability of the constraint tests is 
\begin{align*}
\frac{1}{qd\kappa+1} (1 - \frac{u_s}{R}) + \frac{qd\kappa}{qd\kappa + 1}\left(\frac{1}{\kappa} \cdot (1 - \frac{u_e}{qdR}) \right)
=
\frac{qd + 1}{qd\kappa + 1}- \frac{u_e + u_s}{R \cdot (qd\kappa+1)}\,.
\end{align*}
We now show a constant gap between completeness and soundness.
In completeness, $u_e = u_s = 0$, so $\ket{\psi}$ passes the constraint tests with probability $C^{\textnormal{YES}} := \frac{qd + 1}{qd \kappa + 1}$. In soundness, recall that $\val(\mathcal{C}) \le \delta$, so by \Cref{claim:soundness-np}, any \emph{rigid} quantum proof passes the constraint tests with probability at most $C^{\textnormal{YES}} - \frac{1-\delta}{qd\kappa + 1}$. We now apply \Cref{lemma:validform}: any quantum proof that passes \emph{Density test} and \emph{Validity test} with probabilities too similar to that in completeness must pass the constraint tests with probability less than $C^{\textnormal{YES}}$.
\begin{corollary}
\label{cor:faillabeling}
In soundness, if $D(\ket{\psi}) = \frac{1}{\kappa} - d_1$ and $V(\ket{\psi}) = 1-\frac{1}{\kappa} - d_2$, then
\begin{align*}C(\ket{\psi}) \leq C^{\textnormal{YES}} - \frac{1-\delta}{qd \kappa + 1} + \left(\kappa d_1 + (\kappa+1)\sqrt{\kappa \cdot d_2}\right)^{1/2}\,.
\end{align*}
\end{corollary}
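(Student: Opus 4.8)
The plan is to combine the Rigidity lemma (\Cref{lemma:validform}) with \Cref{fact:preserv} and the soundness bound already established for rigid proofs. The first step is to observe that the entire constraint test is a single two-outcome measurement on $\ket{\psi}$, i.e. $C(\ket{\psi}) = \bra{\psi}\Pi_C\ket{\psi}$ for some operator $0 \le \Pi_C \le \mathbb{I}$. Indeed, the satisfiability branch is a projective measurement in the computational basis that accepts exactly on pairs $(j,x)$ with $x$ satisfying $\calC_j$; the consistency branch is obtained by first applying the isometry $\ket{j}\ket{x} \mapsto \ket{j}\ket{x}\otimes \tfrac{1}{\sqrt{q}}\sum_{i \in \calC_j}\ket{i}\ket{x(i)}$ that produces $\ket{\phi'}$, and then running a fixed sequence of measurements (measure the ``junk'' register in the Hadamard basis, reject unless the outcome is $\ket{+}$, otherwise run a Hadamard test with a uniformly random $\Pi_k$). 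Crucially, the ``reject unless $\ket{+}$'' step is an ordinary measurement outcome, not a renormalizing postselection, so each branch is a genuine quadratic form in $\ket{\psi}$ with a POVM element bounded between $0$ and $\mathbb{I}$; averaging the two branches with the stated probabilities $\tfrac{1}{qd\kappa+1}$ and $\tfrac{qd\kappa}{qd\kappa+1}$ keeps $C(\cdot)$ of the form $\bra{\psi}\Pi_C\ket{\psi}$ with $0 \le \Pi_C \le \mathbb{I}$.

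Next I would instantiate \Cref{lemma:validform} with the hypotheses $D(\ket{\psi}) = \tfrac{1}{\kappa} - d_1$ and $V(\ket{\psi}) = 1 - \tfrac{1}{\kappa} - d_2$: this gives the rigid state $\ket{\chi} = \tfrac{1}{\sqrt{R}}\sum_{j \in [R]} \ket{j}\ket{\sigma(j)}$ (which depends on $\ket{\psi}$ through the maximizing assignment $\sigma$, but is nonetheless of the required rigid form) satisfying $\left|\braket{\chi}{\psi}\right|^2 \ge 1 - d$ with $d := \kappa d_1 + (\kappa+1)\sqrt{\kappa \cdot d_2}$. Applying \Cref{fact:preserv} to $\Pi_C$ with $\ket{\psi_1} = \ket{\psi}$ and $\ket{\psi_2} = \ket{\chi}$ then yields $C(\ket{\psi}) \le C(\ket{\chi}) + \sqrt{d}$. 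Finally, since $\ket{\chi}$ is rigid and we are in the soundness case $\val(\calC) \le \delta$, the computation preceding the corollary, which uses \Cref{claim:soundness-np} to lower-bound $u_s + u_e \ge (1-\delta)R$, gives $C(\ket{\chi}) \le C^{\textnormal{YES}} - \tfrac{1-\delta}{qd\kappa+1}$. Substituting the value of $d$ produces exactly the stated inequality.

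The only step that needs care is the first one: checking that the multi-stage constraint test genuinely collapses to one POVM element on the single proof register, so that \Cref{fact:preserv} applies verbatim; once that is granted, the corollary is a direct chaining of \Cref{lemma:validform}, \Cref{fact:preserv}, and the rigid-state soundness bound. (If $d \ge 1$ the bound is vacuous, so one may freely assume $d < 1$.)
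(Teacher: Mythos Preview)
Your proposal is correct and is precisely the argument the paper intends: the corollary is stated without proof, immediately after the rigid-state soundness computation and \Cref{lemma:validform}, as the combination ``\Cref{lemma:validform} $\Rightarrow$ closeness to $\ket{\chi}$; \Cref{fact:preserv} $\Rightarrow$ transfer of the constraint-test value; rigid soundness via \Cref{claim:soundness-np} $\Rightarrow$ bound on $C(\ket{\chi})$.'' Your explicit check that the multi-branch constraint test is a single POVM element $0\le \Pi_C\le \mathbb{I}$ on the proof register (so that \Cref{fact:preserv} applies) is a detail the paper leaves implicit, and your handling of the ``reject unless $\ket{+}$'' step as a measurement outcome rather than a postselection is exactly right.
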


\subsection{Analysis} \label{sec:analysis-np}
In the protocol, Arthur applies \emph{Density test}, \emph{Validity test}, or \emph{constraint tests} with probability $p_1, p_2, p_3$, respectively, where $p_3 = 1-p_1-p_2$.

We start by analyzing the success probability of the protocol in completeness. Here, $\val(\calC) =1$, and the quantum proof $\ket{\psi} = \frac{1}{R}\sum_j \ket{j} \ket{\vec{v}_j}$ is such that $\vec{v}_j$ is a satisfying assignment to the variables that participate in $\calC_j$. The success probability for each test is exactly $\frac{1}{\kappa}$, $1 - \frac{1}{\kappa}$, and $C^{\textnormal{YES}}$, respectively.
So the success probability of the protocol in completeness is $P_{\textnormal{YES}} = \frac{p_1}{\kappa}+p_2(1 - \frac{1}{\kappa}) + p_3 C^{\textnormal{YES}}$. 

We now choose the probabilities $p_1,p_2,p_3$. Choose $\lambda := \frac{1-\delta}{qd\kappa + 1}$.
\begin{enumerate}
\item We first set a \emph{distance threshold} $\eps := \frac{\lambda}{\Gamma}$ for a large enough constant $\Gamma(\kappa, q, d, \delta)$ satisfying
\begin{align*}
 \left(\kappa\eps + (\kappa+1)\sqrt{\kappa \cdot \eps} \right)^{1/2} \leq \frac{\lambda}{2}\,.
\end{align*}
\item Let $Z :=\frac{1}{2} + 1 +  \frac{\eps}{4(1-C^{\textnormal{YES}})}$. Then let 
\begin{align*}
p_1 &= \frac{1}{2} \cdot \frac{1}{Z} & p_2 &= \frac{1}{Z} & p_3 &= \frac{\eps}{4(1-C^{\textnormal{YES}})} \cdot \frac{1}{Z}\,.
\end{align*}
\end{enumerate}
Now we study soundness, i.e. when $\val(\calC) \leq \delta$. We again denote the quantum proof as $\ket{\psi}$. We divide up the analysis into a few parts:
\begin{enumerate}
    \item A quantum proof that is ``too sparse'' (i.e.\ $D(\ket{\psi}) = \frac{1}{\kappa} - d$ for any $d \geq \eps$) is detected by \emph{Density test}.
    \begin{align*}
        P_{\textnormal{NO}} &= p_1(\frac{1}{\kappa} - d) + p_2V(\ket{\psi}) + p_3C(\ket{\psi}) 
        \\
        &\le P_{\textnormal{YES}} - p_1 d + p_3 (1-C^{\textnormal{YES}}) 
        \\
        &\le P_{\textnormal{YES}} - p_1 \eps + p_3 (1-C^{\textnormal{YES}})
        \\
        &= P_{\textnormal{YES}} - \frac{\eps}{2Z} + \frac{\eps}{4Z} = P_{\textnormal{YES}} - \frac{\eps}{4Z}\,.
    \end{align*} 
    \item A quantum proof that is ``too dense'' (i.e.\ $D(\ket{\psi}) = \frac{1}{\kappa} + d$ for any $d \geq \eps$) is detected by \emph{Validity test}.
    \begin{align*}
        P_{\textnormal{NO}} &= p_1 (\frac{1}{\kappa} + d) + p_2  V(\ket{\psi}) + p_3 C(\ket{\psi})
        \\
        &\le p_1 (\frac{1}{\kappa} + d) + p_2  (1 - \frac{1}{\kappa} - d) + p_3
        \\
        &= P_{\textnormal{YES}} - (p_2 - p_1) d + p_3 (1-C^{\textnormal{YES}})  
        \\
        &\le P_{\textnormal{YES}} - (p_2 - p_1) \eps + p_3  (1-C^{\textnormal{YES}}) \\
        &= P_{\textnormal{YES}} - \frac{\eps}{2Z} + \frac{\eps}{4Z} = P_{\textnormal{YES}} - \frac{\eps}{4Z}\,,
    \end{align*}
    where the first inequality follows from \cref{lemma:density_validity}.
    \item A quantum proof that is ``the right density''  (i.e. $D(\ket{\psi}) = \frac{1}{\kappa} + d_1$ for $|d_1| \leq \eps$) but far from ``valid'' ($V(\ket{\psi}) = 1 -  \frac{1}{\kappa} - d_2$ for $d_2 \geq \eps$) is detected by \emph{Validity test} when $p_2 > p_1$.
    \begin{align*}
        P_{\textnormal{NO}} &\le p_1 (\frac{1}{\kappa} + |d_1|) + p_2 (1 - \frac{1}{\kappa} - d_2) + p_3
        \\
        &\le P_{\textnormal{YES}} + p_1  |d_1| - p_2 d_2 + p_3 (1-C^{\textnormal{YES}})  \\
        &\le P_{\textnormal{YES}} - (p_2 - p_1) \eps +  p_3 (1-C^{\textnormal{YES}}) \\
        &= P_{\textnormal{YES}} - \frac{\eps}{4Z}\,.
    \end{align*}
    \item Lastly, a quantum proof that is nearly \emph{rigid} (i.e. $D(\ket{\psi}) = \frac{1}{\kappa} + d_1$ and $V(\ket{\psi}) = 1 - \frac{1}{\kappa} - d_2$ for any $|d_1|, d_2 \leq \eps$) is detected by the constraint tests.
    \begin{align*}
        P_{\textnormal{NO}} &= p_1 (\frac{1}{\kappa} + d_1) + p_2 (1 - \frac{1}{\kappa} - d_2) + p_3   C(\ket{\psi})
        \\
        &\le P_{\textnormal{YES}} + p_1 d_1 - p_2 d_2 + p_3 \left(- \frac{1-\delta}{qd\kappa + 1} + \left(\kappa d_1 + (\kappa+1)\sqrt{\kappa \cdot d_2}\right)^{1/2} \right) \\
        &\le P_{\textnormal{YES}} + p_1 d_1 - p_2 d_2 - p_3\frac{\lambda}{2} \,.
    \end{align*}
    The first inequality follows from \cref{cor:faillabeling}, and the second inequality holds by our choice of $\eps$. Note that $d_2 \ge 0$ by \Cref{lemma:validity_upper_bound}. By \cref{lemma:density_validity}, if $d_1 \ge 0$, then $d_1 \le d_2$; otherwise $d_1 \le d_2$ trivially. So then
    \begin{align*}
        P_{\textnormal{NO}} 
        &\le P_{\textnormal{YES}} - (p_2 - p_1) d_2 - p_3 \frac{\lambda}{2} 
        \le P_{\textnormal{YES}} - \frac{\eps \lambda}{8 (1-C^{\textnormal{YES}}) Z}\,.
    \end{align*}
\end{enumerate}
Combining these cases proves the following result:
\begin{theorem}
\label{thm:ugc_protocol_qmas}
    Given an instance of $(1, \delta)$-$\GapCSP$, the $\QMA^+_{\log}$ protocol succeeds with probability $P^{\textnormal{YES}}$ in completeness and at most $P^{\textnormal{YES}} - \Delta$ in soundness for some constants $1 > P^{\textnormal{YES}} > \Delta > 0$.
\end{theorem}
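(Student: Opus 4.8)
The plan is to assemble the completeness value and the four soundness cases already computed in the excerpt, once the test probabilities $p_1, p_2, p_3$ are pinned down. First I would record the completeness bound: on the honest proof $\ket{\psi} = \frac{1}{\sqrt{R}}\sum_j \ket{j}\ket{\vec{v}_j}$ the \emph{Density test} succeeds with probability $\frac{1}{\kappa}$ (it is a subset state on $R$ basis states living in $\mathbb{C}^{R\kappa}$), the \emph{Validity test} with probability $1-\frac{1}{\kappa}$, and (since $u_s = u_e = 0$) the constraint tests with probability $C^{\textnormal{YES}} = \frac{qd+1}{qd\kappa+1}$. Hence $P^{\textnormal{YES}} := \frac{p_1}{\kappa} + p_2(1-\frac{1}{\kappa}) + p_3 C^{\textnormal{YES}}$, and this lies strictly between $0$ and $1$: each individual test probability is strictly below $1$ because $\kappa = |\Sigma|^q \ge 2$, while $p_2(1-\frac{1}{\kappa}) > 0$ keeps $P^{\textnormal{YES}}$ positive.

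Next I would check that the parameter choices are legitimate. With $\lambda := \frac{1-\delta}{qd\kappa+1}$ and $\eps := \lambda/\Gamma$, the constant $\Gamma = \Gamma(\kappa,q,d,\delta)$ can be taken large enough that $\bigl(\kappa\eps + (\kappa+1)\sqrt{\kappa\eps}\bigr)^{1/2} \le \lambda/2$, because the left-hand side tends to $0$ as $\eps \to 0$; the same choice makes $\eps$ small enough for \Cref{lemma:validform} and \Cref{cor:faillabeling} to apply with small constants. Then $p_1 = \tfrac{1}{2Z}$, $p_2 = \tfrac{1}{Z}$, $p_3 = \tfrac{\eps}{4(1-C^{\textnormal{YES}})}\cdot\tfrac{1}{Z}$ with $Z = \tfrac{3}{2} + \tfrac{\eps}{4(1-C^{\textnormal{YES}})}$ are non-negative, sum to $1$, and — crucially — satisfy $p_2 > p_1$, which is exactly what is needed so that the \emph{Validity test} dominates the \emph{Density test} in the second and third soundness cases.

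Then, for soundness ($\val(\calC) \le \delta$), I would case-split on an arbitrary proof $\ket{\psi}$ via its Density value $D(\ket{\psi}) = \frac{1}{\kappa} + d_1$ and Validity value $V(\ket{\psi}) = 1 - \frac{1}{\kappa} - d_2$, noting $d_2 \ge 0$ by \Cref{lemma:validity_upper_bound} and $d_1 \le d_2$ by \Cref{lemma:density_validity}: (i) $d_1 \le -\eps$, caught by the \emph{Density test}; (ii) $d_1 \ge \eps$, caught by the \emph{Validity test}; (iii) $|d_1| \le \eps$ but $d_2 \ge \eps$, again caught by the \emph{Validity test}; and (iv) $|d_1|, d_2 \le \eps$ (nearly rigid), where \Cref{cor:faillabeling} shows the constraint tests lose at least $p_3\lambda/2$ by the choice of $\eps$. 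Each case, as computed in the excerpt, yields $P_{\textnormal{NO}} \le P^{\textnormal{YES}} - \Delta$ with $\Delta := \min\bigl\{\tfrac{\eps}{4Z},\, \tfrac{\eps\lambda}{8(1-C^{\textnormal{YES}})Z}\bigr\} > 0$, a constant. Since the four cases exhaust all proofs, the theorem follows. The one place I would be most careful is verifying that a \emph{single} constant $\Gamma$ can be chosen to satisfy all the inequalities invoked across the four cases simultaneously (the threshold inequality, the validity of $p_1,p_2,p_3$ as probabilities, and $p_2 > p_1$); everything else is the bookkeeping already carried out above.
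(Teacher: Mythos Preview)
Your proposal is correct and follows essentially the same approach as the paper: assemble the completeness value from the three test probabilities, fix the parameters $\lambda,\eps,Z,p_1,p_2,p_3$ exactly as stated, and then run the same four-case soundness analysis (too sparse, too dense, right density but not valid, nearly rigid) to extract a uniform constant gap $\Delta = \min\{\eps/(4Z),\, \eps\lambda/(8(1-C^{\textnormal{YES}})Z)\}$. Your concern about $\Gamma$ is unfounded: the only constraint on $\Gamma$ is the threshold inequality, since $p_2 > p_1$ and $p_1+p_2+p_3=1$ hold automatically from the definitions for any $\eps > 0$.
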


\begin{corollary}
    There exist constants $1 > P^{\textnormal{YES}} > \Delta > 0$ such that $\NP \subseteq \QMA^+_{\log}$ with completeness $P^{\textnormal{YES}}$ and soundness $P^{\textnormal{YES}} - \Delta$.
\end{corollary}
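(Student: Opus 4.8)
The plan is to combine \Cref{thm:gapug_nphard} (the constant-gap CSP formulation of the PCP theorem) with \Cref{thm:ugc_protocol_qmas} (the protocol just constructed), and then verify that this protocol respects the resource constraints in the definition of $\QMA^+_{\log}$.

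First I would fix an arbitrary language $L \in \NP$. By \Cref{thm:gapug_nphard} there are absolute constants $q > 1$ and $|\Sigma|$ and a polynomial-time many-one reduction sending an instance $x \in \{0,1\}^n$ to a $(N, R, q, \Sigma)$-CSP system $\mathcal{C}_x$ with $N, R = \poly(n)$, such that $\val(\mathcal{C}_x) = 1$ when $x \in L$ and $\val(\mathcal{C}_x) \le 1/2$ when $x \notin L$. This is exactly the regime of \Cref{thm:ugc_protocol_qmas} with $\delta = 1/2$: that theorem produces a verifier $V$ and constants $1 > P^{\textnormal{YES}} > \Delta > 0$ so that on a YES instance some proof with real non-negative amplitudes is accepted with probability $P^{\textnormal{YES}}$, while on a NO instance every proof with no relative phase is accepted with probability at most $P^{\textnormal{YES}} - \Delta$. (The completeness witness is the subset state $\frac{1}{\sqrt{R}}\sum_{j\in[R]}\ket{j}\ket{\vec v_j}$, which has non-negative amplitudes; soundness against every no-relative-phase proof is precisely what the four-case analysis of \Cref{sec:analysis-np} establishes, via \Cref{lemma:validity_upper_bound,lemma:non-negative_valid,lemma:validform} and \Cref{claim:soundness-np}.)

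Next I would check the two resource requirements. For proof length: Merlin's message lives in $\mathbb{C}^R \otimes \mathbb{C}^{\kappa}$ with $\kappa = |\Sigma|^q = O(1)$, so it has $\lceil \log_2 R\rceil + \lceil \log_2 \kappa \rceil = O(\log n)$ qubits, as demanded by $\QMA^+_{\log}$; the extra registers appearing in $\ket{\phi'}$ and $\ket{\phi}$ are ancillas Arthur adjoins himself, not part of the witness. For verifier efficiency: the reduction $x \mapsto \mathcal{C}_x$ is polynomial time, and since $N, R = \poly(n)$, the regularization step (building a $d$-regular expander of Cheeger constant $\ge 2$ on each polynomially-sized set $V_i$, which standard explicit constructions supply for all sufficiently large constant $d$), the decomposition of the $d$-regular graph $\tilde{G}$ into permutations $\pi_1,\dots,\pi_d$, and the in-place maps $\Pi_k$ together with their inverses are all $\poly(n)$-time computable; the remaining operations (preparing $\ket{\phi'}$, the Hadamard-basis measurements in \emph{Density test}, \emph{Validity test}, and the junk-register measurement, the Hadamard test, and the computational-basis satisfiability check) are standard $\poly(n)$-time quantum subroutines. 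Hence $V$ is a $\BQP$ verifier, and $c = P^{\textnormal{YES}}$, $s = P^{\textnormal{YES}} - \Delta$ are constants (so polynomial-time computable) with $c - s = \Delta = \Omega(1)$.

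Putting these together gives $L \in \QMA^+_{\log}$ with completeness $P^{\textnormal{YES}}$ and soundness $P^{\textnormal{YES}} - \Delta$; since $L$ was arbitrary and the constants depend only on the fixed $q, |\Sigma|$ and $\delta = 1/2$ from \Cref{thm:gapug_nphard} and the fixed expander degree $d$ — not on $L$ — we conclude $\NP \subseteq \QMA^+_{\log}$ with those constants. The step I expect to need the most care is not any single calculation but the bookkeeping that the protocol simultaneously honors the $O(\log n)$ proof-length and $\poly(n)$ verifier-time budgets: confirming that none of the auxiliary structures (the expanders, the permutation decomposition, Arthur's ancilla registers) inflates the witness size, and that the promise-symmetric non-negative-amplitude restriction is exactly the one under which \Cref{sec:analysis-np} proved soundness.
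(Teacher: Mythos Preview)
Your proposal is correct and matches the paper's (implicit) approach: the corollary is stated without proof in the paper, following immediately from \Cref{thm:gapug_nphard} and \Cref{thm:ugc_protocol_qmas}. Your explicit verification of the $O(\log n)$ proof length and polynomial-time verifier is good bookkeeping, though the paper treats these as already absorbed into the phrase ``the $\QMA^+_{\log}$ protocol'' in \Cref{thm:ugc_protocol_qmas}.
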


\section{$\QMA^+$ protocol for $\NEXP$}
\label{section:nexp}
Our goal in this section is to modify the previous protocol to solve an $\NEXP$-complete problem. Again by the PCP theorem, the \emph{succinct} $(1, \delta)$-$\GapCSP$ problem with exponentially many variables and clauses is $\NEXP$-complete. The \emph{succinctness} allows us to efficiently describe the problem input. 
What remains is to ensure that the verifier's protocol is efficient. Previously, the unitary transformations were efficient because the verifier handled $\poly(n)$-size graphs. Furthermore, the expanders used to check the equality constraints for each variable may have different sizes. Now that there can be exponentially many possibilities for the size of each cluster, naively applying the previous technique is not efficient. These challenges were addressed in~\cite{qma2plus} by considering a PCP construction for $\NEXP$ with strong properties.
\begin{theorem}[\cite{qma2plus}]
There is a $\NEXP$-hard $(1, \delta)$-$\GapCSP$ instance for some $(N = 2^{\poly(n)}, R = 2^{\poly(n)}, q = O(1), \Sigma = \{0,1\})$-CSP system $\mathcal{C}$ that is both \emph{$\tau$-strongly uniform} for some constant $\tau$ and $\polylog(NR)$-\emph{doubly explicit}.
\end{theorem}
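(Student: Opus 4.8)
The plan is to obtain the claimed CSP system by invoking the PCP theorem for $\NEXP$ and then upgrading the resulting succinct instance to have the two structural properties. Concretely, I would start from a robust PCP verifier for $\NEXP$ (e.g.\ the scaled-up algebraic construction underlying~\cite{ALMSS98, harsha2004robust}): on input $x$ of length $n$, it tosses $\poly(n)$ coins, reads $q = O(1)$ symbols over $\Sigma = \{0,1\}$ from a proof of length $2^{\poly(n)}$, accepts every true statement with probability $1$, and accepts false statements with probability at most $\delta$. Reading the verifier's decision predicate off its random string yields a $(1,\delta)$-$\GapCSP$ instance $\mathcal{C}$ with $R = 2^{\poly(n)}$ constraints on $N = 2^{\poly(n)}$ variables, and the instance is \emph{succinct}: a $\poly(n)$-size circuit, given a constraint index $j$, outputs the list of the $q$ variables in $\mathcal{C}_j$ together with the predicate. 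The remaining work is to arrange double explicitness and strong uniformity, which are precisely the extra hypotheses the regularization and Hadamard-test steps of the $\NEXP$ protocol will need.

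For \emph{double explicitness}, the forward direction — computing the variables in a given constraint — is already provided by succinctness; the content of the property is the reverse direction, namely listing, given a variable index $i$, all constraints $\mathcal{C}_j$ in which $i$ participates. In the algebraic PCP the incidence relation between proof locations and tests is governed by a constant number of affine/low-degree maps over a finite field of size $2^{\poly(n)}$; for a fixed variable these maps can be inverted coordinate-by-coordinate in $\poly(n)$ time, and since each constraint touches only $q = O(1)$ variables the preimage list has $\poly(n)$ size and can be enumerated explicitly. If one prefers to stay agnostic to the PCP construction, one can instead post-process any succinct CSP through an explicit routing/sorting network so that the variable-to-constraint map becomes computable; this keeps $N, R = 2^{\poly(n)}$ and the per-query time at $\polylog(NR) = \poly(n)$.

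For \emph{$\tau$-strong uniformity}, I would make the degree of every variable efficiently computable and one of $O(1)$ possible values by a standard regularization of the PCP constraint graph: group variables by their (constantly many) syntactic roles in the verifier, and within each role pad with dummy constraints — or split a high-degree variable into equal-degree copies tied together by fresh equality constraints — so that all variables of a given role share a single degree with a closed-form expression. Double explicitness is preserved because the padding map is itself given by small circuits and the new equality constraints inherit the structured incidence relation. The outcome is a CSP that is $\tau$-strongly uniform for some constant $\tau$ and $\polylog(NR)$-doubly explicit, as claimed; this is essentially the construction of~\cite{qma2plus}, which one may cite directly.

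The main obstacle is doing all of this \emph{simultaneously} within the $\polylog(NR)$ time budget: the regularization that yields strong uniformity must not destroy double explicitness, and the sorting/padding transformations must neither inflate the instance beyond $2^{\poly(n)}$ nor create super-constantly many distinct degrees. The enabling ingredient is the existence of \emph{polynomial-time-describable, exponentially-sized expander graphs}~\cite{lubotzky2009finite, alon2021explicit}: with a constant number of expander families (one per role, as guaranteed by $\tau$-strong uniformity) the consistency gadgets of the protocol become computable in $\poly(n)$ time, so once $\mathcal{C}$ is put in this form the $\NEXP$ protocol proceeds verbatim as in the $\NP$ case.
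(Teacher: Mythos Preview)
The paper does not actually prove this theorem: it is stated with attribution to~\cite{qma2plus} and used as a black box, with no proof or sketch supplied. Your proposal outlines a plausible route to the cited result (robust algebraic PCP for $\NEXP$, invert the incidence maps for double explicitness, pad/regularize for strong uniformity), and you correctly observe at the end that one may simply cite~\cite{qma2plus} directly---which is precisely what the paper does.
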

Informally, every constraint in a \emph{succinct} CSP system must be computable in polynomial time. The \emph{doubly explicit} property further requires the existence of efficient maps from variables to constraints \emph{and} from constraints to variables. Intuitively, these maps allow us to efficiently implement the Hadamard test of the consistency checks.

We include the formal definition of these properties.
Define $\Adj_\calC(j)$ to be the list of variables participating in $\calC_j$, and $\Adj_V(i)$ be the list of constraints that depend on variable $i$.
 \begin{definition}[Doubly explicit CSP]
     A $(N, R, q, \Sigma)$-CSP system $\calC$ is $Z(N,R)$-\emph{doubly explicit} if for all $i \in [N]$ and $j \in [R]$, the following are computable in time $Z(N,R)$:
     \begin{enumerate} [label=(\roman*)]
     \item Cardinality of $\Adj_V(i)$ and $\Adj_\calC(j)$ for all $i \in [N]$ and $j \in [R]$.
     \item $\Adj_\calC^{\text{ind}}: [R]\times [N] \rightarrow [q]$; if $i$ participates in $\calC_j$, then $\Adj_\calC^{\text{ind}}(j, i) = \imath$ is the index of $i$ in $\Adj_\calC(j)$.
     \item $\Adj_\calC^{\text{id}}: [R] \times [q] \rightarrow [N]$; $\Adj_\calC^{\text{id}}(j, \imath) = i$ is the $\imath$-th variable of $\Adj_\calC(j)$.
     \item $\Adj_V^{\text{ind}}: [N] \times [R] \rightarrow [R]$; if $i$ participates in $\calC_j$, then $\Adj_V^{\text{ind}}(i, j)=\jmath$ is the index of $j$ in $\Adj_V(i)$.
     \item $\Adj_V^{\text{id}}: [N] \times [R] \rightarrow [R]$; $\Adj_V^{\text{id}}(i, \jmath) = j$ is the $\jmath$-th variable $\Adj_V(i)$.
     \end{enumerate}
 \end{definition}
 This property alone is not enough for efficient regularization: the verifier must know how to implement an expander of size $\left|\Adj_V(i)\right|$ for \emph{all} variables $i$. The \emph{strongly uniform} property resolves this complication. 
 \begin{definition}[Strongly uniform CSP]
 Let $\tau \in \mathbb{N}$. A $(N, R, q, \Sigma)$-CSP system $\calC$ is $\tau$-strongly uniform if the variable set $[N]$ can be partitioned into at most $\tau$ different subsets $\bigcup_y V_y$ such that $|\Adj_V(i)| = |\Adj_V(j)| = n_k$ if $i$ and $j$ belong to the same part $V_k$. Furthermore, the part $k \in [\tau]$ can be determined in time $\polylog(NR)$.
 \end{definition}
 A $\tau$-strongly uniform CSP system allows the verifier to use $\tau$ different (possibly exponential size) $d$-regular expanders. These can be constructed in polynomial time: 
\begin{theorem}[Doubly explicit expander graphs \cite{lubotzky2009finite, alon2021explicit}]
There is a constant $d$ such that the following explicit constructions of expander graphs exist:
\begin{enumerate}
    \item \label{exp:1} For every $n$, there is a $d$-regular graph on $n$ vertices.
    \item \label{exp:2} For every prime $p > 17$, there is a $d$-regular graph on $n = p(p^2 - 1)$ vertices, and the graph can be decomposed into $d$ permutations $\pi_1, \dots, \pi_d$ that can each be evaluated in time $\polylog(n)$.\footnote{In fact, since these graphs are Cayley graphs, both the permutations and their inverses can be evaluated in time $\polylog(n)$. We use both $\pi_k$ and $\pi_k^{-1}$ in the constraint tests to implement the unitary $\calM_k$.}
\end{enumerate}
Furthermore, the neighbors of each variable can be listed in $\polylog(n)$, and the graphs have Cheeger constant at least 2.
\end{theorem}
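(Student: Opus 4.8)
The plan is to package two known constructions, tuned so that a single fixed constant degree $d$ simultaneously satisfies all four requirements (fixed degree, efficiently computable permutation decomposition with efficient inverses, efficient neighbor listing, and Cheeger constant at least $2$). For the number‑theoretically special sizes $n=p(p^2-1)$ in \ref{exp:2} I would use Lubotzky--Phillips--Sarnak (LPS) type Ramanujan Cayley graphs on $\mathrm{PGL}_2(\F_p)$, and for arbitrary $n$ in \ref{exp:1} I would invoke Alon's explicit construction of expanders of every size. In both cases the nontrivial adjacency eigenvalues are bounded away from $d$ by a constant gap, and the Cheeger requirement then follows by pushing $d$ large enough in the discrete Cheeger inequality.

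For \ref{exp:2}: fix a prime $\ell\equiv 1\pmod 4$ and set $d:=\ell+1$. The standard LPS recipe associates to $\ell$ a symmetric multiset $S$ of $\ell+1$ elements of $\mathrm{PGL}_2(\F_p)$ (coming from the integer quaternions of norm $\ell$, reduced mod $p$), chosen so that the Cayley graph $G:=\mathrm{Cay}(\mathrm{PGL}_2(\F_p),S)$ sits on the full group, hence is $d$‑regular on exactly $|\mathrm{PGL}_2(\F_p)|=p(p^2-1)$ vertices. Representing a vertex by a canonical matrix representative in $\F_p^{2\times 2}$ (e.g.\ normalized so its first nonzero entry equals $1$), the $d$ right‑multiplication maps $\pi_s\colon x\mapsto xs$ for $s\in S$ give the desired decomposition of the adjacency matrix into permutations; each $\pi_s$ is a single $2\times 2$ matrix multiplication over $\F_p$ followed by renormalization, computable in $\polylog(p)=\polylog(n)$ time, and since $S$ is symmetric each inverse $\pi_s^{-1}=\pi_{s^{-1}}$ is of the same form. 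Listing all neighbors of a vertex is then just applying the $d$ maps, again $\polylog(n)$. The key spectral input is the Ramanujan bound: every eigenvalue $\lambda\ne d$ of $G$ satisfies $|\lambda|\le 2\sqrt{\ell}=2\sqrt{d-1}$.

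To get the Cheeger constant I would use the easy direction of the discrete Cheeger inequality: for a $d$‑regular graph with second‑largest adjacency eigenvalue $\lambda_2$, the edge Cheeger constant satisfies $h(G)\ge \tfrac12(d-\lambda_2)$. Combined with $\lambda_2\le 2\sqrt{d-1}$ this gives $h(G)\ge \tfrac12\bigl(\sqrt{d-1}-1\bigr)^2$, which exceeds $2$ as soon as $\sqrt{d-1}\ge 3$, i.e.\ $d\ge 10$. So choosing, say, $\ell=13$ and $d=14$ works uniformly for all primes $p>17$ (the hypothesis $p>17$ only serves to guarantee $p\ne\ell$ and the genericity conditions of the LPS construction), and this same $d$ will be fixed once and for all; the graph is then non‑bipartite‑or‑bipartite is irrelevant here since $h$ depends only on $\lambda_2$, not on the bottom of the spectrum.

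For \ref{exp:1} the sizes above are far too sparse among the integers, so I would invoke Alon's construction of explicit $d$‑regular graphs on \emph{every} number of vertices $n$ (choosing $d$ even to avoid any $nd$‑parity obstruction), which yields a constant spectral gap $d-\lambda_2=\Omega(d)$ with efficient neighbor queries; feeding this gap into the same Cheeger inequality and enlarging the fixed $d$ if necessary forces $h\ge 2$. The main obstacle is precisely this last point: obtaining expanders on \emph{arbitrary} $n$ with one fixed degree while retaining explicitness, the permutation structure, and a spectral gap large enough to certify $h\ge 2$, since the clean quaternion/Cayley constructions only produce the algebraically special sizes. This is exactly what the cited work of Alon supplies; everything else (canonical vertex labelling, efficiency of the generator maps and their inverses, and translating the Ramanujan/near‑Ramanujan bound into a Cheeger bound) is routine bookkeeping.
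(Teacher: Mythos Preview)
The paper does not prove this theorem at all: it is quoted as a black-box result from \cite{lubotzky2009finite,alon2021explicit}, with no argument beyond the footnote remarking that the graphs are Cayley graphs (so the generator maps and their inverses are efficient). Your sketch is therefore not competing with a paper proof but rather unpacking the citations, and it does so accurately: the $n=p(p^2-1)$ family is indeed the LPS-type Cayley construction on $\mathrm{PGL}_2(\F_p)$ with generator set of size $\ell+1$, right multiplication by generators gives the permutation decomposition and its inverse in $\polylog(n)$ arithmetic over $\F_p$, the Ramanujan bound $\lambda_2\le 2\sqrt{d-1}$ plugged into the easy Cheeger direction $h\ge (d-\lambda_2)/2$ yields $h\ge 2$ once $d\ge 10$, and Alon's all-sizes construction handles item~\ref{exp:1}. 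One small point worth noting for context: in the paper's application, Construction~\ref{exp:1} is only invoked when $n\le n_0$ for a fixed constant $n_0$, so the efficiency and permutation-decomposition requirements there are essentially vacuous; the nontrivial explicitness burden falls entirely on Construction~\ref{exp:2}, which your LPS discussion handles correctly.
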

With this theorem, the verifier can choose a large constant $n_0$, and use Construction \ref{exp:1} if $n_i \le n_0$. Otherwise, the verifier can cover \emph{almost} all $n_i$ vertices with an explicit expander using Construction \ref{exp:2}.
\begin{theorem}[Primes in short intervals \cite{cheng_primes}]
There is an absolute constant $k_0$ such that for any integer $k > k_0$, there is a prime in the interval $[k - 4k^{2/3}, k]$.
\end{theorem}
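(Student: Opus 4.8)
The plan is to follow the classical analytic route. First, reduce the statement to a lower bound on the Chebyshev function: since prime powers $p^m$ with $m\ge 2$ contribute only $O(\sqrt{k}\log k)$ to $\psi(k)-\psi(k-4k^{2/3})$, where $\psi(x)=\sum_{p^m\le x}\log p$, it suffices to prove $\psi(k)-\psi(k-4k^{2/3})\ge 2k^{2/3}$ for all $k>k_0$; this forces a genuine prime into $[k-4k^{2/3},\,k]$. The finitely many $k\le k_0$ are then disposed of by a direct computation: for such $k$ the interval has length at least $1$ once $4k^{2/3}\ge 1$, and existing tables of prime gaps (which show $p_{n+1}-p_n$ is far below $p_n^{2/3}$ throughout the verified range) certify a prime in it.

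For $k>k_0$, write $y:=4k^{2/3}$ and invoke the truncated explicit formula at a height $T$,
\[
\psi(k)-\psi(k-y) \;=\; y \;-\; \sum_{|\gamma|\le T}\frac{k^{\rho}-(k-y)^{\rho}}{\rho}\;+\;O\!\left(\frac{k\log^2 k}{T}\right),
\]
where $\rho=\beta+i\gamma$ ranges over the nontrivial zeros of $\zeta$. The main term is exactly $y$, so the task is to bound everything else by less than $y-2k^{2/3}=2k^{2/3}$. Using $k^{\rho}-(k-y)^{\rho}=\rho\int_{k-y}^{k}t^{\rho-1}\,dt$ one has $|k^{\rho}-(k-y)^{\rho}|/|\rho|\le y\,k^{\beta-1}$, together with the trivial bound $\le 2k^{\beta}/|\gamma|$; take the minimum of the two. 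Split the zeros by the size of $|\gamma|$. Zeros with $|\gamma|$ below the height $H$ up to which the Riemann Hypothesis has been numerically verified have $\beta=\tfrac12$, and contribute only $O\!\big(k^{1/2}(\log H)^2\big)$. Zeros with $|\gamma|>H$ are handled by an explicit zero-density estimate $N(\sigma,T)\ll T^{A(1-\sigma)}(\log T)^{B}$ of Ingham--Huxley type (classically such estimates already yield prime gaps of size $k^{5/8+o(1)}$, comfortably inside $k^{2/3}$), used in tandem with an explicit zero-free region (Vinogradov--Korobov, or even the classical one) to keep $\beta$ bounded away from $1$ for the zeros of large height --- the genuinely dangerous ones, since $k^{\beta-1}$ is then near $1$. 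Summing $\sum_\rho y\,k^{\beta-1}$ against the density estimate and then choosing $T$ large enough to make the $O(k\log^2 k/T)$ tail negligible, one bounds the whole error by $O(k^{2/3}/\log k)$, which is below $2k^{2/3}$ once $k$ is large.

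The step I expect to be the main obstacle is making every constant explicit while keeping $k_0$ of feasible size. One has to balance three quantities against one another: the explicit zero-free region (controlling zeros with $\beta$ near $1$), the RH-verification height $H$ (below which $\beta=\tfrac12$), and the truncation height $T$; forcing the constant in front of $k^{2/3}$ in the error to be strictly below $1$ is delicate, and the factor $4$ in the statement is presumably exactly the slack needed to absorb the inefficiency of the explicit density bounds. Since this is a result of Cheng, one may of course simply invoke \cite{cheng_primes} rather than reprove it.
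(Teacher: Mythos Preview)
The paper does not prove this theorem at all; it is stated with a citation to \cite{cheng_primes} and then used as a black box in the construction of the efficient expanders. There is nothing to compare your argument against, because the paper's ``proof'' is exactly the one-line invocation you yourself suggest in your final sentence.

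Your sketch is the standard analytic route (explicit formula, RH verification up to height $H$, zero-density estimates \`a la Ingham--Huxley for the remaining zeros, explicit zero-free region), and it is broadly correct as an outline of how results of this type are obtained. You are also right that the genuine difficulty is in making every constant explicit so that the final bound reads $4k^{2/3}$ with a concrete $k_0$; that bookkeeping is the content of Cheng's paper, and reproducing it here would be both lengthy and entirely orthogonal to the complexity-theoretic point of the present work. For the purposes of this paper you should simply cite the result, as the authors do.
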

We modify the protocol to expect the quantum proof $\ket{p_1,p_2, \dots, p_\tau} \otimes \frac{1}{ R}\sum_{j \in R} \ket{j}\ket{\vec{v}_j}$ such that each 
$p_i \in [\lfloor n_{i}^{1/3}\rfloor - 4 \lfloor n_{i}^{1/3}\rfloor^{2/3}, \lfloor n_{i}^{1/3}\rfloor]$. The verifier measures the primes, and can always check that every $p_i$ is a prime number in the required range.
The rest of the analysis is similar to the $\NP$ protocol, but regularized using these efficient expanders.
We first explain how to efficiently implement the \emph{constraint tests}, and then analyze the $\QMA^+$ protocol.

\subsection{Efficient constraint tests}
We show how to efficiently implement  \emph{consistency} checks that imply a version of \cref{claim:soundness-np}.
Fix any vertex $i$. Let $n$ be the number of constraints that depend on $i$ and $p$ be the corresponding prime. Let $n_0$ be a large enough constant. If $n \le n_0$, then use Construction \ref{exp:1} $d$-regular expander to wire new copies of the vertex together, just as for $\NP$. Otherwise, use Construction \ref{exp:2} to generate a $d$-regular expander graph of size $p(p^2-1) \in [n - O(n^{8/9}), n]$ that wires nearly all copies of the vertex together. Then add $d$ self-loops for the remaining vertices. The number of vertices with self loops is at most some $\eta n$ (for some small constant $\eta$) since $p(p^2-1) \geq n - O(n^{8/9})$; we can make $\eta$ arbitrarily small by choosing a large enough $n_0$.

\begin{claim}[\cite{qma2plus}]\label{claim:soundness}
Consider a $(N,R,q,\Sigma)$-CSP system $\mathcal{C}$, and apply \emph{efficient regularization}. If $\val(\calC) = 1$, then all ``consistency constraints'' can be simultaneously satisfied. If $\val(\calC) \le \delta$, then the total number of unsatisfied constraints is at least $(1-\delta-q\eta) R$.
\end{claim}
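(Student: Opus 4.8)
The plan is to mirror the soundness argument behind \Cref{claim:soundness-np} (Dinur's regularization lemma), but paying attention only to the copies of each variable that actually get wired into an expander. Completeness is immediate: if $\val(\calC)=1$, fix a global satisfying assignment $\sigma:[N]\to\Sigma$, let Merlin send the rigid proof with $\vec v_j=\sigma|_{\calC_j}$, and note that every copy $(j,i)$ of a variable $i$ then carries the value $\sigma(i)$; hence every consistency constraint — whether an expander edge or a self-loop — asserts an equality between two copies carrying the same value, so all of them are satisfied.

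For soundness, suppose $\val(\calC)\le\delta$ and fix any rigid proof, which assigns a value $v_j(i)\in\Sigma$ to each copy $(j,i)$ with $i\in\calC_j$. For each variable $i$, write $W_i$ for the set of copies of $i$ lying in the expander built on variable $i$, and $U_i$ for the remaining self-looped copies, so $|U_i|\le\eta n_i$ with $n_i=|\Adj_V(i)|$. Let $\sigma(i)$ be a plurality value among the copies in $W_i$, and let $A_i\subseteq W_i$ be the copies in $W_i$ carrying this value, so $|W_i\setminus A_i|$ counts the ``wired non-plurality copies'' of $i$. The key estimate is the edge-expansion (Cheeger) bound, exactly as for \Cref{claim:soundness-np}: partition $W_i$ by value; each non-plurality colour class has size at most $|W_i|/2$, so with Cheeger constant at least $2$ its edge boundary has size at least twice the class size, and summing over non-plurality classes (each bichromatic edge counted at most twice) shows the number of violated consistency edges inside variable $i$'s expander is at least $|W_i\setminus A_i|$. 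Summing over $i$ gives $u_e\ \ge\ \sum_i |W_i\setminus A_i|$, where self-loops contribute nothing to $u_e$.

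Next I would use $\sigma$ as a global assignment. Since $\val(\calC)\le\delta$, at least $(1-\delta)R$ constraints $\calC_j$ are falsified by $\sigma$. Split these: if the local assignment $\vec v_j$ already falsifies $\calC_j$, charge it to $u_s$; otherwise $\vec v_j$ satisfies $\calC_j$ but disagrees with $\sigma$ on some participating variable $i$, so the copy $(j,i)$ carries a value $\ne\sigma(i)$, and such a copy is either a wired non-plurality copy of $i$ (a member of $W_i\setminus A_i$) or one of the self-looped copies in $U_i$. Charge each second-group constraint to one such ``bad'' copy; since a bad copy $(j,i)$ can be charged only by constraint $j$, the number of second-group constraints is at most $\sum_i|W_i\setminus A_i|+\sum_i|U_i|\le u_e+\eta\sum_i n_i$. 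Finally $\sum_i n_i=\sum_j q=qR$ because every constraint has arity exactly $q$, so the second group has size at most $u_e+q\eta R$, giving $(1-\delta)R\le u_s+u_e+q\eta R$, i.e.\ $u_s+u_e\ge(1-\delta-q\eta)R$.

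The only real difficulty is the bookkeeping rather than any delicate quantitative estimate: one must apply the Cheeger bound to the \emph{wired} cloud $W_i$ (not the full cloud), observe that self-loops never contribute to $u_e$ so they cannot silently absorb inconsistencies, and track that the $q\eta R$ slack is exactly $\eta\sum_i n_i$ via the identity $\sum_i n_i=qR$. As with \Cref{claim:soundness-np}, the constant $\eta$ can be driven below any desired threshold (in particular below $(1-\delta)/q$, keeping the bound meaningful) by choosing the cutoff $n_0$ in the efficient regularization large enough.
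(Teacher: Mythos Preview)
Your proposal is correct and follows exactly the approach the paper indicates: it is the standard Dinur regularization argument behind \Cref{claim:soundness-np}, with the only modification being that plurality is taken over the wired cloud $W_i$ and the at most $\eta n_i$ self-looped copies per variable contribute the extra $q\eta R$ slack via $\sum_i n_i = qR$. The paper itself only sketches this, noting that the analysis is ``similar to \cref{claim:soundness-np}'' with the additional $q\eta$ term arising because self-loop constraints ``can be satisfied without violating any `consistency constraint','' which is precisely what your charging argument makes explicit.
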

The analysis of \Cref{claim:soundness} is similar to \cref{claim:soundness-np}. The additional factor of $q\eta$ comes from the self-loop constraints; these can be satisfied without violating any ``consistency constraint''. 

After measuring the primes, let the verifier act on the space $\mathbb{C}^R \otimes \mathbb{C}^{\kappa} \otimes \mathbb{C}^N \otimes \mathbb{C}^{|\Sigma|}$. We explicitly define the unitary operators that are used in the $\NP$ protocol. These definitions exactly match~\cite{qma2plus}. First, operator $\calA$ expands the values from the list of values of variables involved in each constraint:
\begin{align*}
    \calA: \ket{j}\ket{v}\ket{0}\ket{0} &\rightarrow \frac{1}{q} \sum_{r=1}^q \ket{j}\ket{\vec{v}}\ket{i_r}\ket{v_r}\,,
\end{align*}
Here, $v$ is the list of values of variables involved in constraint $j$,  $i_r$ is the $r$-th variable involved in $j$, and $v_r$ is the value of $i_r$ according to $v$. 
Next, define the permutation operators $\calM_k$ for each $k \in [d]$ that implement the $d$ permutations of each efficiently constructed expander:
\begin{align*}
    \calM_k: \ket{j}\ket{v} \ket{i}\ket{v'} \rightarrow \ket{\Pi_k(j, i)} \ket{v} \ket{i} \ket{v'}
\end{align*}
The last operation computes the constraints in superposition:
\begin{align*}
    \calB \ket{j}\ket{v} \ket{0} \rightarrow \ket{j}\ket{v}\ket{\calC_j(v)}
\end{align*}

\begin{theorem}[\cite{qma2plus}]
$\calA, \calB, \calM_k$ can be implemented by $\BQP$ circuits.
\end{theorem}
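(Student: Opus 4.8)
The plan is to verify that each of $\calA$, $\calB$, and $\calM_k$ is computable by a uniform family of polynomial-size quantum circuits, using the \emph{doubly explicit} and \emph{strongly uniform} properties of $\calC$ together with the explicit expander constructions. For $\calB$, the key observation is that \emph{succinctness} of the CSP system means the predicate $\calC_j(v)$ is computable in time $\polylog(NR) = \poly(n)$ from $j$ and $v$; hence the reversible computation $\ket{j}\ket{v}\ket{0} \mapsto \ket{j}\ket{v}\ket{\calC_j(v)}$ is a $\poly(n)$-size classical reversible circuit, which lifts directly to a $\BQP$ circuit. For $\calA$, I would first prepare the uniform superposition $\frac{1}{q}\sum_{r=1}^q \ket{r}$ over $\lceil \log q\rceil = O(1)$ qubits (a constant-depth rotation), then use $\Adj_\calC^{\text{id}}(j,r)$ to write $i_r$ into the third register, and read off $v_r$ from the list $v$ at position $r$ (position $r$ is known from the index register). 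All of these are $\poly(n)$-time classical maps by double explicitness, so the whole of $\calA$ is a controlled reversible computation of $\poly(n)$ size.

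The main work is implementing $\calM_k$, i.e. the in-place permutation $\ket{j}\ket{i} \mapsto \ket{\Pi_k(j,i)}\ket{i}$ (the registers $\ket{v}, \ket{v'}$ are spectators). Fix a variable $i$; let $n = |\Adj_V(i)|$ and let $p$ be the prime in the proof associated with $i$'s part $k \in [\tau]$, which is determined in time $\polylog(NR)$ by strong uniformity. Using $\Adj_V^{\text{ind}}(i,j)$ we convert the global constraint label $j$ into the local index $\jmath \in \{0,\dots,n-1\}$ of $j$ within $\Adj_V(i)$; this is a reversible $\poly(n)$-time map with inverse $\Adj_V^{\text{id}}(i,\cdot)$. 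Now within the local index space we apply the permutation coming from the $k$-th edge-decomposition $\pi_k$ of the chosen expander: if $\jmath$ is among the $p(p^2-1)$ vertices covered by the Construction~\ref{exp:2} graph, we apply the Cayley-graph permutation $\pi_k$, evaluable (with its inverse) in time $\polylog(n)$; if $\jmath$ lies in the at-most-$\eta n$ leftover vertices with self-loops, $\pi_k$ acts as the identity. (If instead $n \le n_0$, we use the fixed-size Construction~\ref{exp:1} graph, whose $d$-permutation decomposition is hard-coded as an $O(1)$-size table.) Composing the ``local index'' conversion, the expander permutation, and the inverse conversion back to the global label $j'$ gives a reversible map on $[R]\times[N]$; since each piece and its inverse is $\poly(n)$-time computable, $\calM_k$ is implementable by a $\poly(n)$-size reversible (hence $\BQP$) circuit, extended by the identity on variables $i$ not participating in $j$.

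The one subtlety worth spelling out is \emph{reversibility in place}: naively a classical circuit computing $f(j,i) = (\Pi_k(j,i),i)$ produces garbage, but because $(j,i)\mapsto(\Pi_k(j,i),i)$ is a genuine permutation of $[R]\times[N]$ whose inverse is \emph{also} $\poly(n)$-time computable (Cayley graphs give both $\pi_k$ and $\pi_k^{-1}$ efficiently, and the $\Adj$ maps are mutually inverse), the standard Bennett-style trick applies: compute $j' = \Pi_k(j,i)$ into a fresh register, uncompute $j$ using $\Pi_k^{-1}$ applied to $(j',i)$, and swap — yielding a clean in-place permutation with only ancilla work-space that is returned to $\ket{0}$. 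I expect this bookkeeping, rather than any genuine difficulty, to be the bulk of the argument; the real content is already packaged in the cited explicitness and expander theorems, so the proof amounts to checking that each primitive and its inverse runs in $\poly(n)$ time and that they compose into a reversible circuit of the stated form.
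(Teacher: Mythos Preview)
Your proposal is correct and is essentially the argument one would give; note, however, that the paper itself does not prove this theorem at all --- it simply cites it from \cite{qma2plus}, so there is no ``paper's proof'' to compare against. Your sketch is a faithful reconstruction of why the cited result holds: $\calB$ is just the reversible lift of the succinct predicate evaluator; $\calA$ is built from a constant-size index superposition plus the doubly-explicit maps $\Adj_\calC^{\text{id}}$ and $\Adj_\calC^{\text{ind}}$; and $\calM_k$ is the composition $\Adj_V^{\text{id}}\circ \pi_k \circ \Adj_V^{\text{ind}}$ made in-place via Bennett using the efficient inverses.

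Two small cleanups: (i) in $\calA$ you should explicitly uncompute the auxiliary index register $\ket{r}$ after writing $\ket{i_r}$, using $r = \Adj_\calC^{\text{ind}}(j,i_r)$ --- otherwise the output carries an unwanted $\ket{r}$; and (ii) avoid reusing the symbol $n$ for $|\Adj_V(i)|$, since the paper already uses $n$ for the input length.
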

\subsection{Analysis}
The analysis is nearly the same as in the $\NP$ protocol, with the minor difference that the verifier also receives $p_1, \ldots, p_\tau$ in the quantum proof. The rigidity tests are unchanged. For the constraint tests, the verifier can use the explicit operators $\calA, \calB, \calM_k$:
\begin{enumerate}[label=(\roman*)]
\item For \emph{satisfiability}, the prover computes $\calB \ket{\psi}\ket{0}$ and measures the second qubit in the standard basis.
\item For \emph{consistency}, the prover computes $\calA \ket{\psi}$, selects random $d \in [k]$, and uses $\calM_k$ in Hadamard test.
\end{enumerate}
We already know that for a quantum proof of the valid form, we can write the success probability as:
\begin{align*}
    C(\ket{\psi}) &= \frac{qd+1}{qd\kappa+1} - \frac{u_e+u_s}{R\cdot(qd\kappa+1)}
\end{align*}

To be able to analyze soundness, all that is left is to reprove \cref{cor:faillabeling} to handle the subtle difference between \cref{claim:soundness} and \cref{claim:soundness-np}.
Let $D(\ket{\psi}) = \frac{1}{\kappa} \pm d_1$ and $V(\ket{\psi}) = 1 - \frac{1}{\kappa} - d_2$, then we can write:
\begin{align*}
C(\ket{\psi}) &=  C^{\textnormal{YES}} - \frac{u_e+u_s}{R\cdot(qd\kappa+1)}\\
& \leq C^{\textnormal{YES}} - \frac{R\cdot(1-\delta-q\eta)}{R \cdot(qd\kappa+1)} + \left(\kappa d_1 + (\kappa +1)\sqrt{\kappa \cdot d_2}\right)^{1/2}\,.
\end{align*}
We can choose a small enough $\eta$ (via large enough $n_0$) so that $\eta q < \frac{1-\delta}{2}$. 
\begin{corollary} \label{corollary:soundness-constraints}
If $D(\ket{\psi}) = \frac{1}{\kappa} \pm d_1$ and $V(\psi) = 1-\frac{1}{\kappa} - d_2$, then:
$$C(\ket{\psi}) \leq C^{\textnormal{YES}} - \frac{1-\delta}{2\left(qd\kappa+1\right)} + \left(\kappa d_1 + (\kappa+1)\sqrt{\kappa \cdot d_2}\right)^{1/2}$$
\end{corollary}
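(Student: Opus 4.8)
\textbf{Proof plan for \Cref{corollary:soundness-constraints}.}
The statement is essentially the $\NEXP$-analogue of \Cref{cor:faillabeling}, so the plan is to re-run the argument of \Cref{cor:faillabeling} while tracking the one change: the soundness guarantee on the number of unsatisfied constraints is now $(1-\delta-q\eta)R$ (from \Cref{claim:soundness}) rather than $(1-\delta)R$ (from \Cref{claim:soundness-np}). Since the excerpt already exhibits the chain of inequalities displayed just before the corollary, the only remaining step is to absorb the $q\eta$ term.

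First I would recall the exact-form success probability $C(\ket{\psi}) = C^{\textnormal{YES}} - \frac{u_e+u_s}{R(qd\kappa+1)}$, valid for a rigid proof of the form $\frac{1}{\sqrt R}\sum_j \ket{j}\ket{\vec v_j}$; this is stated just above. Then, in soundness ($\val(\calC)\le\delta$), \Cref{claim:soundness} gives $u_e+u_s \ge (1-\delta-q\eta)R$, so $C(\ket{\psi}) \le C^{\textnormal{YES}} - \frac{1-\delta-q\eta}{qd\kappa+1}$ for rigid proofs. Next I would invoke the rigidity lemma (\Cref{lemma:validform}): if $D(\ket{\psi}) = \frac{1}{\kappa}\pm d_1$ and $V(\ket{\psi}) = 1-\frac1\kappa - d_2$, then $\ket{\psi}$ has squared overlap at least $1 - \kappa d_1 - (\kappa+1)\sqrt{\kappa d_2}$ with some rigid state $\ket{\chi}$, so by \Cref{fact:preserv} applied to the POVM element corresponding to the constraint tests, $C(\ket{\psi})$ exceeds the rigid-state bound by at most $(\kappa d_1 + (\kappa+1)\sqrt{\kappa d_2})^{1/2}$. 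This yields the displayed bound $C(\ket{\psi}) \le C^{\textnormal{YES}} - \frac{1-\delta-q\eta}{qd\kappa+1} + (\kappa d_1 + (\kappa+1)\sqrt{\kappa d_2})^{1/2}$.

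Finally, the one genuinely new ingredient: choose $n_0$ (the constant size threshold separating Construction~\ref{exp:1} from Construction~\ref{exp:2}) large enough that the self-loop fraction $\eta$ satisfies $q\eta < \frac{1-\delta}{2}$; this is possible because $\eta = O(n_0^{-1/9}) \to 0$. Then $1-\delta-q\eta > \frac{1-\delta}{2}$, so $\frac{1-\delta-q\eta}{qd\kappa+1} > \frac{1-\delta}{2(qd\kappa+1)}$, and substituting this weaker (but clean) bound into the previous display gives exactly the claimed inequality.

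I do not expect any real obstacle here: the corollary is a bookkeeping variant of \Cref{cor:faillabeling}, and all the analytic content (Cauchy–Schwarz in \Cref{lemma:density_validity}, the no-relative-phase maximization in \Cref{lemma:non-negative_valid,lemma:validform}, and the trace-distance bound \Cref{fact:preserv}) is already established. The only point requiring a sentence of justification is that $\eta$ can be driven below $(1-\delta)/q$ by enlarging $n_0$, which follows from $p(p^2-1) \ge n - O(n^{8/9})$ together with the primes-in-short-intervals theorem; everything else is direct substitution.
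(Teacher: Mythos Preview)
Your proposal is correct and follows essentially the same route as the paper: the paper likewise writes $C(\ket{\psi}) \le C^{\textnormal{YES}} - \frac{1-\delta-q\eta}{qd\kappa+1} + (\kappa d_1 + (\kappa+1)\sqrt{\kappa d_2})^{1/2}$ by reproving \Cref{cor:faillabeling} with \Cref{claim:soundness} in place of \Cref{claim:soundness-np}, and then chooses $n_0$ large enough that $q\eta < \frac{1-\delta}{2}$.
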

For soundness, the same analysis of \cref{sec:analysis-np} goes through by reducing $\lambda$ by a factor of $2$; this change comes from \cref{corollary:soundness-constraints}, which handles the extra $q\eta$ self-loop constraints. All together:
\begin{theorem}    Consider $(1,\delta)$-$\GapCSP$ with a $(N = 2^{\poly(n)}, R = 2^{\poly(n)}, q = O(1), \Sigma = \{0,1\})$-CSP system that is $\polylog(NR)$-doubly explicit and $O(1)$-strongly uniform. The $\QMA^+$ protocol solves this problem with completeness $c$ and soundness $s$ for some constants $1 > c > s > 0$.
\end{theorem}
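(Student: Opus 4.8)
The plan is to run the $\QMA^+_{\log}$ protocol of \Cref{thm:ugc_protocol_qmas} essentially verbatim, replacing each polynomial-size gadget by its succinct, doubly explicit counterpart. Arthur expects the proof $\ket{p_1,\dots,p_\tau}\otimes\frac{1}{R}\sum_{j\in[R]}\ket{j}\ket{\vec{v}_j}$ on $\poly(n)$ qubits, first measures and validates the prime register, and then, conditioned on a valid outcome, applies either the \emph{Density}/\emph{Validity} rigidity tests exactly as before or the \emph{constraint tests} realized through the explicit operators $\calA,\calB,\calM_k$. A key observation is that measuring the prime register in the computational basis does not disturb the no-relative-phase promise: a proof $\sum_{\vec{p}}\ket{\vec{p}}\ket{\psi_{\vec{p}}}$ with non-negative amplitudes collapses to some $\ket{\psi_{\vec{p}}}$ that again has non-negative amplitudes, so the residual state is a legitimate $\QMA^+$-style proof on which \Cref{lemma:validity_upper_bound,lemma:non-negative_valid,lemma:validform} apply unchanged.

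First I would verify efficiency. By $\tau$-strong uniformity the variable clusters take only $\tau = O(1)$ distinct sizes $n_1,\dots,n_\tau$, and the cluster index of any variable is computable in $\polylog(NR)$; by $\polylog(NR)$-double explicitness the maps $\Adj_\calC^{\text{ind}},\Adj_\calC^{\text{id}},\Adj_V^{\text{ind}},\Adj_V^{\text{id}}$ and the adjacency cardinalities are all efficiently computable, which is precisely what is needed to realize $\calA$, $\calB$, and the permutations $\calM_k$ by $\BQP$ circuits. For the regularization graph attached to a cluster of size $n$ with prime $p$, I would use Construction~\ref{exp:1} when $n \le n_0$ and otherwise Construction~\ref{exp:2} on $p(p^2-1) \in [n - O(n^{8/9}), n]$ vertices, padding the at most $\eta n$ leftover vertices with $d$ self-loops; a prime $p$ in the prescribed range exists by the primes-in-short-intervals theorem, and $p$ is itself supplied in the proof and primality-checked in $\polylog$ time. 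All of this runs in $\polylog(NR) = \poly(n)$ time, so Arthur is a genuine $\BQP$ verifier.

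Next I would redo completeness and soundness. In completeness $\val(\calC) = 1$, so a satisfying assignment gives $u_s = 0$ and, counting the self-loop constraints as well, $u_e = 0$; the honest proof restricted to the constraint and color registers is a subset state with $R$ terms, so the \emph{Density}, \emph{Validity}, and constraint tests succeed with probability exactly $\frac{1}{\kappa}$, $1 - \frac{1}{\kappa}$, and $C^{\textnormal{YES}}$, reproducing the same $P_{\textnormal{YES}}$ as in the $\NP$ protocol. For soundness I would replace \Cref{cor:faillabeling} by \Cref{corollary:soundness-constraints}: the only change is that the regularization slack weakens from $\frac{1-\delta}{qd\kappa+1}$ to $\frac{1-\delta}{2(qd\kappa+1)}$ because of the $q\eta$ self-loop constraints, which is absorbed by choosing $n_0$ large enough that $q\eta < \frac{1-\delta}{2}$. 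With $\lambda := \frac{1-\delta}{2(qd\kappa+1)}$ in place of its earlier value, I would reuse verbatim the four-case split of \Cref{sec:analysis-np} --- ``too sparse'', ``too dense'', ``right density but not valid'', and ``nearly rigid'' --- with the same distance threshold $\eps = \lambda/\Gamma$ and the same probabilities $p_1,p_2,p_3$; in each case the computation is word-for-word identical and yields $P_{\textnormal{NO}} \le P_{\textnormal{YES}} - \Omega(1)$, hence a constant gap.

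The main obstacle is not the probabilistic analysis, which is inherited from \Cref{thm:ugc_protocol_qmas}, but making every step efficient at exponential scale: the verifier can no longer store the constraint graph and must instead generate each required permutation on the fly, which is exactly what the doubly-explicit and strongly-uniform PCP together with the explicit expander constructions provide, at the price of the self-loop padding. The one place that needs genuine care is confirming that this padding contributes only the harmless additive $q\eta$ term in \Cref{claim:soundness}, and hence only a constant-factor loss in the final completeness--soundness gap.
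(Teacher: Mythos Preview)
Your proposal is correct and follows essentially the same approach as the paper: lift the $\QMA^+_{\log}$ protocol wholesale, use the doubly-explicit and strongly-uniform PCP together with the explicit expander constructions to make $\calA,\calB,\calM_k$ efficient, absorb the self-loop padding into a weakened $\lambda = \frac{1-\delta}{2(qd\kappa+1)}$ via \Cref{corollary:soundness-constraints}, and rerun the four-case analysis of \Cref{sec:analysis-np} verbatim. Your explicit remark that a computational-basis measurement of the prime register preserves non-negativity of the residual amplitudes is a point the paper leaves implicit but is indeed needed for the rigidity lemmas to apply to the post-measurement state.
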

\begin{corollary}
    There exist constants $1 >c > s >  0$ such that $\NEXP \subseteq \QMA^+_{c,s}$.
\end{corollary}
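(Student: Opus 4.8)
\textbf{Proof proposal for }$\NEXP \subseteq \QMA^+_{c,s}$\textbf{.}

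The plan is to lift the $\NP \subseteq \QMA^+_{\log}$ construction (\Cref{thm:ugc_protocol_qmas}) to the exponential scale, exactly mirroring the structure of the $\NP$ protocol but replacing every ingredient that was ``efficient because $N,R=\poly(n)$'' with an explicitly efficient analogue. First I would invoke the PCP theorem for $\NEXP$ in the strengthened form stated in the excerpt: there is a $\NEXP$-hard $(1,\delta)$-$\GapCSP$ instance on a $(N=2^{\poly(n)}, R=2^{\poly(n)}, q=O(1), \Sigma=\{0,1\})$-CSP system $\calC$ that is $\polylog(NR)$-doubly explicit and $O(1)$-strongly uniform. The doubly explicit property supplies the efficient maps $\Adj_\calC^{\text{ind}}, \Adj_\calC^{\text{id}}, \Adj_V^{\text{ind}}, \Adj_V^{\text{id}}$ needed to build $\calA$ and the permutation operators in polynomial time, and the strongly uniform property guarantees only $\tau=O(1)$ distinct cluster sizes $n_1,\dots,n_\tau$, so the verifier needs only $\tau$ expander families rather than exponentially many.

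The next step is \emph{efficient regularization}. For each variable $i$ in cluster $k$, if $n_k\le n_0$ (a large constant) use Construction~\ref{exp:1} to wire together the copies of $i$ with a $d$-regular expander; otherwise, have Merlin supply primes $p_1,\dots,p_\tau$ in the proof, have the verifier check each $p_k$ is prime and lies in $[\lfloor n_k^{1/3}\rfloor - 4\lfloor n_k^{1/3}\rfloor^{2/3}, \lfloor n_k^{1/3}\rfloor]$ (possible by the primes-in-short-intervals theorem, and classically checkable in $\poly(n)$ time), and use Construction~\ref{exp:2} to get an explicit $d$-regular expander on $p_k(p_k^2-1)\in[n_k-O(n_k^{8/9}), n_k]$ vertices, adding $d$ self-loops to the $\le\eta n_k$ leftover copies. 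This yields \Cref{claim:soundness}: in a $\val=1$ instance all consistency constraints are satisfiable, and in a $\val\le\delta$ instance at least $(1-\delta-q\eta)R$ constraints (original plus consistency) fail. One then re-runs the rigidity-test machinery (\Cref{lemma:validity_upper_bound}, \Cref{lemma:density_validity}, \Cref{lemma:non-negative_valid}, \Cref{lemma:validform}) completely unchanged — those lemmas only used the no-relative-phase promise and the register structure $\mathbb{C}^R\otimes\mathbb{C}^\kappa$, both of which survive verbatim — and re-derives the constraint-test soundness bound, now as \Cref{corollary:soundness-constraints}, where the extra $q\eta$ term is absorbed by choosing $n_0$ large enough that $q\eta<\tfrac{1-\delta}{2}$, effectively halving $\lambda$.

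After that, the analysis of \Cref{sec:analysis-np} carries over line by line: set $\lambda:=\frac{1-\delta}{2(qd\kappa+1)}$, choose the distance threshold $\eps$ and the test probabilities $p_1,p_2,p_3$ exactly as before, and split soundness into the four cases (too sparse, too dense, right density but not valid, nearly rigid), each detected by the same test with the same $\eps/(4Z)$-type margin. This gives a constant-gap $\QMA^+$ protocol, hence $\NEXP\subseteq\QMA^+_{c,s}$. The main obstacle is not the soundness bookkeeping — which is essentially a copy of the $\NP$ case — but verifying that every unitary in the protocol ($\calA$, $\calB$, and especially the permutation operators $\calM_k$ arising from the huge explicit expanders) is genuinely implementable by a $\poly(n)$-size quantum circuit despite acting on a $2^{\poly(n)}$-dimensional space; this rests on the $\polylog(n)$-time evaluability of the expander permutations and their inverses and on the doubly explicit adjacency maps, and is where the strengthened PCP hypothesis of \cite{qma2plus} and the explicit expander constructions of \cite{lubotzky2009finite,alon2021explicit} are essential.
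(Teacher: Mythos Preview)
Your proposal is correct and follows essentially the same approach as the paper: you invoke the doubly explicit, strongly uniform $\NEXP$-PCP, perform efficient regularization with Merlin-supplied primes and the explicit expander constructions, reuse the rigidity lemmas verbatim, and rerun the \Cref{sec:analysis-np} case analysis with $\lambda$ halved to absorb the $q\eta$ self-loop slack. The paper's own Section~\ref{section:nexp} does exactly this, and the corollary is then immediate from the $\NEXP$-hardness of the chosen $\GapCSP$ instance.
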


\section{Subtle features of $\QMA^+$}
\label{sec:qmaplus_subtleties}
\subsection{Promise symmetry matters}
One can imagine restricting to proofs with non-negative amplitudes \emph{only} in completeness. But this class is equal to $\QMA$:

\begin{fact}
\label{fact:qma_mod_YES_only}
    Consider the class $\QMA^{+'}$, where the proof must have non-negative amplitudes \emph{only in completeness}. 
Since subset states have non-negative amplitudes, $\SQMA \subseteq \QMA^{+'} \subseteq \QMA$. By \cite{sqma}, $\SQMA = \QMA^{+'} = \QMA$.
\end{fact}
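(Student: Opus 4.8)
The plan is to establish the sandwich $\SQMA \subseteq \QMA^{+'} \subseteq \QMA$ directly from the definitions and then collapse it using the known identity $\SQMA = \QMA$ of \cite{sqma}. The key structural observation is that all three classes have an \emph{identical} soundness requirement — the acceptance probability is bounded by $s$ over \emph{all} quantum proofs — and differ only in how strongly the completeness witness is constrained: $\QMA$ allows any state, $\QMA^{+'}$ demands non-negative amplitudes, and $\SQMA$ demands a subset state. So the only substantive input is the cited collapse; everything else is a one-line check of definitions.

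For $\QMA^{+'} \subseteq \QMA$, I would take a $\QMA^{+'}$ protocol with verifier $V$ and gap $(c,s)$ and reuse $V$ verbatim as a $\QMA$ verifier: in the yes-case the promised non-negative-amplitude witness is in particular a legal quantum state accepted with probability $\ge c$, so $\QMA$-completeness holds; in the no-case the $\QMA^{+'}$ soundness bound already quantifies over all states, giving $\QMA$-soundness unchanged. For $\SQMA \subseteq \QMA^{+'}$, I would similarly reuse the $\SQMA$ verifier, noting that a subset state (a uniform superposition over a subset of standard basis states) has non-negative amplitudes up to a global phase, hence is a legal $\QMA^{+'}$ completeness witness; soundness again transfers with no change. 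Both inclusions are witnessed by the same verifier with the same $(c,s)$, so they hold at the level of the gap-union classes without any need to re-amplify.

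Chaining the two containments with $\QMA = \SQMA$ yields $\QMA = \SQMA \subseteq \QMA^{+'} \subseteq \QMA$, so all three coincide. There is essentially no obstacle here beyond invoking the cited equality $\SQMA = \QMA$; the only thing to be careful about is purely definitional bookkeeping — that the promise-gap conventions used for $\SQMA$, $\QMA^{+'}$, and $\QMA$ match up — which is automatic since every inclusion above preserves the gap exactly. (It is worth remarking that this is precisely the regime where, unlike the promise-symmetric $\QMA^+$ studied above, leaving soundness unrestricted lets the standard $\QMA$ amplification go through, which is why no power is gained.)
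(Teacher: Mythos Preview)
Your proposal is correct and matches the paper's argument exactly: the paper's ``proof'' is entirely contained in the statement itself, namely the sandwich $\SQMA \subseteq \QMA^{+'} \subseteq \QMA$ from the definitions (subset states have non-negative amplitudes; relaxing the completeness witness can only help) together with the cited collapse $\SQMA = \QMA$. You have simply spelled out in more detail what the paper asserts in one line.
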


Instead, $\QMA^+$ also restricts the proof in soundness, which reduces the ways Merlin can ``deceive'' Arthur.
This increases the power of the complexity class:
\begin{corollary}
\label{cor:qma_at_most_qmaplus}
    Notice that $\QMA^{+'}_{c,s} \subseteq \QMA^+_{c,s}$ for any choice of $0 \le c,s \le 1$, since any $\QMA^{+'}_{c,s}$ protocol is also a $\QMA^+_{c,s}$ protocol.
    Since $\QMA = \QMA^{+'}$, $\QMA \subseteq \QMA^+_{c,s}$ whenever $c < 1$ and $c - s \ge  \frac{1}{p(n)}$ for any polynomial $p(n)$.
\end{corollary}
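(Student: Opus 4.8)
The final statement to prove is \Cref{cor:qma_at_most_qmaplus}, which has essentially two claims: first, that $\QMA^{+'}_{c,s} \subseteq \QMA^+_{c,s}$ for all $0 \le c, s \le 1$; and second, that $\QMA \subseteq \QMA^+_{c,s}$ whenever $c < 1$ and $c - s \ge 1/p(n)$ for some polynomial $p$. The plan is to dispatch the first inclusion by a trivial observation, then leverage \Cref{fact:qma_mod_YES_only} together with standard $\QMA$ amplification to conclude the second.

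For the first inclusion, I would argue as follows: a $\QMA^{+'}_{c,s}$ protocol is, by definition, a $\BQP$ verifier $V$ for which the completeness condition demands an accepting proof with non-negative amplitudes, while the soundness condition rejects \emph{all} quantum proofs (with no restriction). The soundness condition of $\QMA^+_{c,s}$ is \emph{weaker} — it only requires rejecting proofs with non-negative amplitudes, a subset of all proofs — while the completeness conditions are identical. Hence the very same verifier $V$ witnesses membership in $\QMA^+_{c,s}$, giving $\QMA^{+'}_{c,s} \subseteq \QMA^+_{c,s}$.

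For the second inclusion, the key input is \Cref{fact:qma_mod_YES_only}, which states $\QMA = \QMA^{+'}$ (more precisely, $\QMA = \QMA^{+'}_{2/3,1/3}$, and by \cite{sqma} subset-state completeness suffices). The remaining ingredient is that $\QMA$ admits gap amplification: any $\QMA_{2/3,1/3}$ protocol can be converted, by sequential repetition and majority vote, into a $\QMA_{1 - 2^{-q(n)}, 2^{-q(n)}}$ protocol for any polynomial $q$; crucially, the honest proof in the amplified protocol is a tensor power of the original honest proof, and if the original honest proof is a subset state (as we may assume via \cite{sqma}), so is this tensor power. Thus $\QMA = \QMA^{+'}_{1 - 2^{-q(n)}, 2^{-q(n)}}$ for every polynomial $q$. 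Given target $c < 1$ and $c - s \ge 1/p(n)$, one picks $q$ large enough that $1 - 2^{-q(n)} \ge c$ and $2^{-q(n)} \le s$ — this is possible precisely because the amplified completeness tends to $1$ while the amplified soundness tends to $0$, and $c$ is bounded away from $1$ only by a constant while $c-s \ge 1/p(n)$ forces $s$ to be at least inverse-polynomially large (or, if $s$ is tiny, one still only needs $2^{-q(n)} \le s$ which holds for polynomially large $q$ as long as $s \ge 2^{-\poly(n)}$; since $s$ is polynomial-time computable and $c \le 1$, the relevant regime is $s \ge 1/p(n) - (1-c)$, handled identically). Then a $\QMA^{+'}_{1-2^{-q(n)}, 2^{-q(n)}}$ protocol is in particular a $\QMA^{+'}_{c,s}$ protocol (weakening completeness from $1 - 2^{-q(n)}$ down to $c$ and soundness from $2^{-q(n)}$ up to $s$ only makes both conditions easier to satisfy), and by the first inclusion it is a $\QMA^+_{c,s}$ protocol. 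Chaining gives $\QMA \subseteq \QMA^+_{c,s}$.

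I do not anticipate a genuine obstacle here — the statement is a corollary and both parts are essentially definitional once \Cref{fact:qma_mod_YES_only} is in hand. The one point requiring a little care is making sure the amplification preserves the non-negative-amplitude (indeed subset-state) structure of the honest proof: this is why the argument routes through $\SQMA$ rather than arbitrary $\QMA$ proofs, since an arbitrary honest $\QMA$ proof need not have non-negative amplitudes, but \cite{sqma} guarantees we may take it to be a subset state, and tensor powers of subset states are subset states. A secondary bookkeeping point is the edge case where $s$ is extremely small (sub-inverse-polynomial); but since the hypothesis only says $c - s \ge 1/p(n)$ with $c \le 1$, we always have enough room, and in any case $s$ being tiny only makes the soundness requirement $2^{-q(n)} \le s$ demand a slightly larger polynomial $q$, which amplification still delivers.
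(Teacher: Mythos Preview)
Your proposal is correct and follows essentially the same approach as the paper: the first inclusion is the trivial observation that weakening the soundness quantifier from all states to non-negative states only relaxes the condition, and the second chains $\QMA = \QMA^{+'}$ (via \Cref{fact:qma_mod_YES_only}) with amplification and the first inclusion. Your added care about why parallel repetition preserves the non-negative-amplitude structure of the honest proof (tensor powers of subset states are subset states) is a useful detail the paper leaves implicit; your edge-case discussion about sub-inverse-polynomial $s$ is slightly muddled (if $s$ were exponentially small in a super-polynomial sense or zero, polynomial $q$ would not suffice), but this is a regime the paper's informal statement also glosses over and not a defect in the main argument.
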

In general, suppose $\mathcal{R}$ is a set of quantum states that approximate all quantum states (i.e. an $\epsilon$-covering) by at least an inverse polynomial in number of qubits. Then $\QMA$ is equal to $\QMA$ restricted to $\mathcal{R}$ in completeness, and at most $\QMA$ restricted to $\mathcal{R}$ in both completeness and soundness.

Furthermore, classes that modify $\QMA$ only in completeness enjoy promise gap amplification through parallel repetition. This does not hold for promise-symmetric modifications.
We provide a simple example of how parallel repetition fails to amplify the promise gap of $\QMA^+$:
\begin{fact}
\label{fact:max_of_reals_is_not_norm}
    Consider a Hermitian and positive semidefinite matrix $M$.
Let $\| M \|_+ := \max_{\ket{v} \ge 0} \frac{\| M \ket{v} \|_2}{\|\ket{v}\|_2}$ be the maximum value among real non-negative vectors. Then it is possible for $\| M \otimes M \|_+ > \| M \|_+^2$.
\end{fact}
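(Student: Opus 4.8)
Since we want a positive semidefinite Hermitian matrix $M$ where the non-negative operator norm is super-multiplicative under tensoring, the natural first move is to pick the smallest $M$ for which $\|M\|_+$ is not attained at a standard basis vector but the tensor power's optimum "spreads out" to a configuration that isn't a tensor of non-negative vectors. A $2\times 2$ example is too rigid (for $2\times 2$ PSD $M$, one can check $\|M\|_+$ behaves multiplicatively), so I would try $M$ a $3\times 3$ or $4\times 4$ matrix. A clean candidate is something like $M = I + J'$ where $J'$ has a single pair of negative off-diagonal entries, forcing the optimal non-negative vector for $M$ to put weight on only a sub-block, while $M\otimes M$ has a "cross term" structure that lets a genuinely entangled-looking non-negative vector do better than any product.

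\textbf{Key steps, in order.} First I would fix a concrete $M$ (say a $3\times 3$ symmetric PSD matrix with one negative off-diagonal entry, e.g.\ $M = \begin{psmallmatrix} 2 & 1 & -1 \\ 1 & 2 & 0 \\ -1 & 0 & 2\end{psmallmatrix}$ or similar, adjusting so it is PSD). Second, compute $\|M\|_+ = \max_{\ket v \ge 0} \bra v M^2 \ket v / \braket{v}{v}$ — note since $M$ is PSD this equals $\max_{\ket v \ge 0} \|M\ket v\|_2 / \|\ket v\|_2$, but it is cleaner to work with the quadratic form of $M^2$, or alternatively redefine everything via $\bra v M \ket v$ if one only needs \emph{some} gap; the key point is that the maximizing non-negative $\ket v$ for a matrix with a negative entry is typically \emph{supported on a strict subset} of coordinates, because putting weight on a coordinate connected by a negative entry is penalized. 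Third, observe that $M \otimes M$ has entries $M_{ik}M_{jl}$, so a product of two negative entries becomes \emph{positive}; this means a non-negative vector on $\C^3 \otimes \C^3$ can "use" coordinate pairs $(i,j)$ where both individual coordinates were penalized in $M$, because the combined interaction is now favorable. Fourth, exhibit an explicit non-negative $\ket w \in \C^9$ (not of product form) with $\|(M\otimes M)\ket w\|_2/\|\ket w\|_2 > \|M\|_+^2$, and verify the inequality numerically/symbolically. Finally, note $M\otimes M$ is PSD since $M$ is, so the construction is legitimate; this establishes the fact.

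\textbf{Main obstacle.} The hard part is finding an $M$ small enough to make the computation of $\|M\|_+$ tractable by hand, yet structured enough that $M\otimes M$ genuinely beats $\|M\|_+^2$ — many naive candidates turn out to be multiplicative because the optimal non-negative vector of $M$ already avoids the negative entries "enough" that no tensor trick helps. I would resolve this by deliberately engineering $M$ so that $\|M\|_+$ is forced onto a $2$-dimensional sub-block (leaving one coordinate unused), while $M\otimes M$'s optimal non-negative vector is supported on a $4$ or $5$-dimensional set that includes "doubly-negative-interaction" coordinate pairs — the off-diagonal sign flip under tensoring is precisely the phenomenon being exploited, mirroring how partial measurement in $\QMA^+$ can destroy the non-negativity structure and why parallel repetition fails. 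A safe fallback is to present the example with $\bra v M \ket v$ in place of $\|M\ket v\|_2$ throughout (which is all that is needed to illustrate the failure of multiplicativity of "best non-negative overlap"), since that quadratic form is easier to analyze and the conceptual point — super-multiplicativity — is identical.
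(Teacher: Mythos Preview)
Your plan contains a concrete false assertion: you write that ``a $2\times 2$ example is too rigid (for $2\times 2$ PSD $M$, one can check $\|M\|_+$ behaves multiplicatively).'' This is exactly wrong, and the paper's proof is a $2\times 2$ counterexample. Take $M = \ketbra{x_-}{x_-}$ with $\ket{x_-} = \tfrac{1}{\sqrt{2}}(\ket{0}-\ket{1})$, so $M = \tfrac{1}{2}\begin{psmallmatrix}1 & -1\\ -1 & 1\end{psmallmatrix}$. For any non-negative unit vector $\ket{v}=(a,b)$ one has $\|M\ket{v}\| = |\langle x_-|v\rangle| = \tfrac{|a-b|}{\sqrt{2}} \le \tfrac{1}{\sqrt{2}}$, with equality at $\ket{0}$ or $\ket{1}$; hence $\|M\|_+ = 1/\sqrt{2}$. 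But $M\otimes M$ is the projector onto $\ket{x_- x_-} = \tfrac{1}{2}(\ket{00}-\ket{01}-\ket{10}+\ket{11})$, and the non-negative state $\ket{\chi}=\tfrac{1}{\sqrt{2}}(\ket{00}+\ket{11})$ gives $\|(M\otimes M)\ket{\chi}\| = |\langle x_- x_-|\chi\rangle| = \tfrac{1}{\sqrt{2}} > \tfrac{1}{2} = \|M\|_+^2$. Done.

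Your underlying intuition is actually the right one --- a negative off-diagonal in $M$ penalizes non-negative vectors, while in $M\otimes M$ the product of two negative entries becomes positive and can be exploited by an ``entangled'' non-negative vector --- but you pointed it at the wrong dimension. The projector onto $\ket{x_-}$ is the minimal instance of exactly the mechanism you describe, and because it is a rank-one projector the computation of $\|M\|_+$ reduces to a single inner product rather than optimizing a quadratic form over a cone. Your proposed $3\times 3$ route might eventually succeed, but you never verify that your candidate (which, incidentally, is not PSD as written: the matrix $\begin{psmallmatrix}2&1&-1\\1&2&0\\-1&0&2\end{psmallmatrix}$ has determinant $8 - 2 - 2 = 4 > 0$ and trace $6$, so... actually it is PSD, but you'd still need to carry out the optimization) actually exhibits the gap, and your ``safe fallback'' to $\bra{v}M\ket{v}$ proves a different statement than the one asked. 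The clean fix is simply to drop the $2\times 2$ dismissal and use the projector above.
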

\begin{proof}
Consider two qubits and the projector $M = \ket{x_-}\bra{x_-}$, where $M$ projects into the Pauli-X basis (i.e. $\ket{x_-} = \frac{1}{\sqrt{2}}(\ket{0} - \ket{1})$). 
Then  $\| M \|_+ = \frac{1}{\sqrt{2}}$, maximized at $\ket{0}$ or $\ket{1}$.
But using the state $\ket{\chi} = \frac{1}{\sqrt{2}}(\ket{00} + \ket{11})$, $\| M \otimes M \|_+ \ge \| M \ket{\chi}\| = \frac{1}{\sqrt{2}} > \| M \|_+^2$.
\end{proof}

\subsection{$\QMA^+$ at some constant gap equals $\QMA$}
Perhaps surprisingly, $\QMA^+_{c,s}$ \emph{equals} $\QMA$ for some constants $1 > c > s > 0$. This is because every quantum state can be approximated (up to a constant) by a quantum state \emph{without relative phase}:
\begin{proposition}
\label{prop:qmaplus_nearly_qma}
$\QMA^+_{c,s} \subseteq \QMA_{c,4s}$.
\end{proposition}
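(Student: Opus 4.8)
The plan is to show that \emph{any} $\QMA^+_{c,s}$ protocol, with \emph{no modification}, is already a valid $\QMA_{c,4s}$ protocol. Fix a $\QMA^+_{c,s}$ verifier $V$. Completeness is immediate: in the yes-case the honest $\QMA^+$ proof already has non-negative amplitudes, so it is in particular a legitimate (unrestricted) $\QMA$ proof, and it is accepted with probability at least $c$. All the content is in soundness: I must argue that in the no-case, \emph{every} quantum state $\ket{\psi}$ on the proof register — now allowed arbitrary relative phases — is accepted with probability at most $4s$.

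First I would write the acceptance probability as a quadratic form. Including the ancilla register, the acceptance probability on proof $\ket{\psi}$ equals $\bra{\psi}\Pi\ket{\psi}$, where $\Pi := \bra{0}V^\dagger(\ket{1}\bra{1}\otimes \mathbb{I})V\ket{0}$ is a positive semidefinite operator on the proof space with $0 \le \Pi \le \mathbb{I}$, so $\bra{\psi}\Pi\ket{\psi} = \|\sqrt{\Pi}\ket{\psi}\|_2^2$. Next, decompose $\ket{\psi} = \sum_x a_x \ket{x}$ into four subnormalized states with non-negative amplitudes: writing $a_x = r_x + i t_x$ with $r_x,t_x \in \R$, set $\ket{\gamma_1} := \sum_{r_x>0} r_x\ket{x}$, $\ket{\gamma_2} := \sum_{r_x<0}(-r_x)\ket{x}$, $\ket{\gamma_3} := \sum_{t_x>0} t_x\ket{x}$, $\ket{\gamma_4} := \sum_{t_x<0}(-t_x)\ket{x}$. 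Then $\ket{\psi} = \ket{\gamma_1} - \ket{\gamma_2} + i\ket{\gamma_3} - i\ket{\gamma_4}$, each $\ket{\gamma_k}$ has non-negative amplitudes, and since the four pieces split into the (disjoint-support) positive/negative parts of the real and imaginary parts of $\ket{\psi}$, we get $\sum_{k=1}^4 \|\gamma_k\|_2^2 = \|\psi\|_2^2 = 1$. Now apply the triangle inequality followed by Cauchy--Schwarz on four terms:
\begin{align*}
\bra{\psi}\Pi\ket{\psi} = \left\|\sqrt{\Pi}\ket{\psi}\right\|_2^2 \;\le\; \left(\sum_{k=1}^4 \left\|\sqrt{\Pi}\ket{\gamma_k}\right\|_2\right)^2 \;\le\; 4\sum_{k=1}^4 \left\|\sqrt{\Pi}\ket{\gamma_k}\right\|_2^2 \;=\; 4\sum_{k=1}^4 \bra{\gamma_k}\Pi\ket{\gamma_k}\,.
\end{align*}
Finally, for each $k$ with $\|\gamma_k\|_2 > 0$ the normalized state $\ket{\gamma_k}/\|\gamma_k\|_2$ has non-negative amplitudes, hence is a valid $\QMA^+$ proof, so in the no-case it is accepted with probability at most $s$; thus $\bra{\gamma_k}\Pi\ket{\gamma_k} = \|\gamma_k\|_2^2 \cdot \bra{\gamma_k/\|\gamma_k\|_2}\Pi\ket{\gamma_k/\|\gamma_k\|_2} \le s\|\gamma_k\|_2^2$ (and this holds trivially when $\|\gamma_k\|_2 = 0$). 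Summing gives $\bra{\psi}\Pi\ket{\psi} \le 4s\sum_{k=1}^4\|\gamma_k\|_2^2 = 4s$, as desired.

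The steps requiring care are all bookkeeping: (i) acceptance probability is invariant under the global phases $\pm 1, \pm i$ attached to the four pieces, so each $\ket{\gamma_k}$ may genuinely be treated as a non-negative proof subject to the $\QMA^+$ soundness bound; (ii) the four supports partition so that the squared norms really do sum to $1$; and (iii) the constant must be $4$, not worse — it is important to apply Cauchy--Schwarz before squaring, since a naive triangle-inequality bound on $\bra{\psi}\Pi\ket{\psi}$ would only give $16s$. I do not expect any serious obstacle beyond pinning down this constant.
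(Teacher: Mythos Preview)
Your proof is correct and follows essentially the same approach as the paper: decompose an arbitrary state into non-negative-amplitude pieces and lose a factor of $4$ in soundness. The paper carries this out in two stages (real states $\to 2s$, then arbitrary states $\to 4s$) by expanding the quadratic form and bounding cross terms via Cauchy--Schwarz, whereas you handle all four pieces at once using $\sqrt{\Pi}$, the triangle inequality, and Cauchy--Schwarz on the four norms --- a slightly cleaner packaging of the same idea.
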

\begin{proof}
Consider a problem in $\QMA^+_{c,s}$, and let $\Pi_1$ be its accepting operator. We will use the same circuit in $\QMA$, and analyze the new completeness and soundness.
    \begin{itemize}
        \item \emph{(completeness)} The same completeness proof is a valid proof for $\QMA$, accepting with completeness $c$.
        \item \emph{(soundness)} Recall that $\bra{\chi}\Pi_1 \ket{\chi} \le s$ for any $\ket{\chi}$ with non-negative amplitudes. 
        
        Consider any state $\ket{\psi}$ with \emph{real} (but possibly negative) amplitudes. Separate and normalize its positive entries and negative entries, i.e. $\ket{\psi} = \sqrt{p} \ket{\psi_+} - \sqrt{1-p} \ket{\psi_-}$. Notice that $\bra{\psi_+}\ket{\psi_-} = 0$, so $\ket{\psi}$ is of unit norm.
        Then
        \begin{align*}
        \bra{\psi}\Pi_1\ket{\psi} 
        &= p \bra{\psi_+} \Pi_1 \ket{\psi_+} + (1-p)  \bra{\psi_-} \Pi_1 \ket{\psi_-} - \sqrt{p(1-p)} ( \bra{\psi_-} \Pi_1 \ket{\psi_+}  +  \bra{\psi_+} \Pi_1 \ket{\psi_-} ) 
        \\ &\le p \bra{\psi_+} \Pi_1 \ket{\psi_+} + (1-p)  \bra{\psi_-} \Pi_1 \ket{\psi_-} + 2\sqrt{p(1-p)} \sqrt{ \bra{\psi_-} \Pi_1 \ket{\psi_-} \bra{\psi_+} \Pi_1 \ket{\psi_+}}
        \\ &\le s + 2s\sqrt{p(1-p)} \le 2s\,,
        \end{align*}
        where the first inequality holds by Cauchy-Schwarz because $\Pi_1$ is positive semidefinite.

        Similarly, consider any state with arbitrary amplitudes. Separate and normalize its real and imaginary entries, i.e. $\ket{\phi} = \sqrt{p'}\ket{\phi_{\mathbb{R}}} + i \sqrt{1-p'} \ket{\phi_{i\mathbb{R}}}$. Notice that $\ket{\phi}$ is still unit norm. By the same calculation, one finds that $\bra{\phi} \Pi_1 \ket{\phi} \le 4s$.
    \end{itemize}
\end{proof}
\begin{corollary}
\label{cor:qma_plus_equals_qma}
For any $0 < \eps < 0.2$, $\QMA^+_{0.8+\eps,0.2} = \QMA$.
\end{corollary}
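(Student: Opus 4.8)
The plan is to derive \Cref{cor:qma_plus_equals_qma} as a straightforward consequence of the two-sided containment already established in the excerpt. Recall that \Cref{cor:qma_at_most_qmaplus} gives $\QMA \subseteq \QMA^+_{c,s}$ whenever $c < 1$ and the gap $c - s$ is at least inverse-polynomial; in particular, with $c = 0.8 + \eps$ and $s = 0.2$ (so $c < 1$ for $\eps < 0.2$ and $c - s = 0.6 + \eps = \Omega(1)$), we get $\QMA \subseteq \QMA^+_{0.8 + \eps, 0.2}$. For the reverse direction, I would invoke \Cref{prop:qmaplus_nearly_qma}: $\QMA^+_{0.8 + \eps, 0.2} \subseteq \QMA_{0.8 + \eps, 4 \cdot 0.2} = \QMA_{0.8 + \eps, 0.8}$. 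The issue is that $\QMA_{0.8 + \eps, 0.8}$ has soundness $0.8$, which is \emph{larger} than the completeness $0.8 + \eps$ only by $\eps$, so this is a valid (if weak) $\QMA$ protocol with gap $\eps > 0$; standard $\QMA$ gap amplification (e.g.\ via parallel repetition, which is valid for ordinary $\QMA$ since no promise restriction is placed on the proof in soundness) then boosts this to the canonical $\QMA_{2/3, 1/3}$, so $\QMA^+_{0.8+\eps, 0.2} \subseteq \QMA$.

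Concretely, the steps in order are: (1) cite \Cref{cor:qma_at_most_qmaplus} with the parameter choice $c = 0.8 + \eps$, $s = 0.2$, verifying $c < 1$ and $c - s = \Omega(1)$, to conclude $\QMA \subseteq \QMA^+_{0.8 + \eps, 0.2}$; (2) apply \Cref{prop:qmaplus_nearly_qma} to get $\QMA^+_{0.8 + \eps, 0.2} \subseteq \QMA_{0.8 + \eps, 0.8}$; (3) observe $0.8 + \eps > 0.8$, so this is a genuine $\QMA$-type promise gap, and invoke standard $\QMA$ amplification to identify $\QMA_{0.8 + \eps, 0.8}$ with $\QMA$; (4) combine the two containments to get equality.

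I expect the only mildly delicate point to be step (3): one has to be careful that after applying \Cref{prop:qmaplus_nearly_qma} the resulting soundness $4s = 0.8$ is still strictly below the completeness $0.8 + \eps$, which forces the constraint $\eps > 0$ (guaranteed) and, implicitly, that the original $\QMA^+$ soundness parameter be small enough ($s = 0.2$, so $4s = 0.8 < 0.8 + \eps$). For other constant gaps where $4s \ge c$ the argument would fail, which is exactly why the statement fixes $c$ near $0.8$ and $s = 0.2$; this is consistent with \Cref{fig:qma_vs_nexp}, where $\QMA^+_{c,s} = \QMA$ is claimed precisely in the regime $c/s > 4$. Everything else is bookkeeping: the parallel-repetition amplification for ordinary $\QMA$ is completely standard and needs no new idea here, since the soundness restriction that obstructs amplification for $\QMA^+$ (discussed around \Cref{fact:max_of_reals_is_not_norm}) is absent once we have passed to $\QMA$.
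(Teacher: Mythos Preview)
The proposal is correct and follows essentially the same approach as the paper: one containment via \Cref{cor:qma_at_most_qmaplus} and the other via \Cref{prop:qmaplus_nearly_qma}. Your write-up simply makes explicit the amplification step identifying $\QMA_{0.8+\eps,0.8}$ with $\QMA$, which the paper leaves implicit.
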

\begin{proof}
    $\QMA^+_{0.8+\eps,0.2} \subseteq \QMA$ follows from \Cref{prop:qmaplus_nearly_qma}. The other direction follows from \Cref{cor:qma_at_most_qmaplus}.
\end{proof}
This is a strange phenomenon: depending on the choice of constants $c>s$, $\QMA^+_{c,s}$ could be as small as $\QMA$ and as large as $\NEXP$!\footnote{
Note that the same phenomenon holds for $\QMA^+(2)$ and $\QMA(2)$ with nearly the same proof. This is why \cite{qma2plus} was perceived as ``just a constant gap away'' from proving $\QMA(2) = \NEXP$.}
See \Cref{fig:qma_vs_nexp} for a pictorial description.
An implication of our work is that assuming $\EXP \ne \NEXP$, $\QMA^+$ simply \emph{cannot} be amplified.


\section{Open questions}
\begin{enumerate}
    \item \textbf{What is the relationship of $\QMA$ and $\QMA(2)$?} Our result does not immediately say anything about $\QMA$ and $\QMA(2)$. It only suggests that for $\QMA$, the restriction of \emph{relative phase} is maximally strong. For example, it is possible that $\QMA(2) = \NEXP$; i.e. the restriction of \emph{entanglement} across a fixed barrier may be just as powerful.
In fact, showing $\QMA(2) = \QMA^+(2)$ is still an open route to proving $\QMA(2) = \NEXP$, but in light of this work, amplification for $\QMA^+(2)$ must crucially rely on the unentanglement promise.
\item \textbf{Are other complexity classes sensitive to different constant-sized promise gaps?} We show that for $\QMA^+_{c,s}$, the parameter $\frac{c}{s}$ can be ``tuned'' to change the power of the class from $\QMA$ to $\NEXP$ (see also \Cref{fig:qma_vs_nexp}). Do other complexity classes drastically change power with different promise gaps? One similar class is $\SBQP$~\cite{kuperberg2015hard}, which equals $\BQP_{c,s}$ when $\frac{c}{s} = 2$ (but unlike our work, $c$ and $s$ are exponentially small). However, when $\frac{c}{s}$ is allowed to be any number above $1$, $\BQP_{c,s}$ is equal to $\PP$~\cite{deshpande2022importance}. Note that relative to oracles, $\SBQP$ is not closed under intersection, which was used to separate it from $\QMA$~\cite{sbp_qma}. 
\item \textbf{State complexity vs. decision complexity?} Although we prove that there are constants $c_1,s_1,c_2,s_2$ such that $\QMA^+_{c_1,s_1} = \QMA^+(2)_{c_2,s_2}$, we do not prove the existence of any \emph{product test} (as in~\cite{harrow2013testing}). In fact, it is possible that no product test exists! This would show a separation between the complexity of \emph{decision problems} and \emph{state synthesis problems}; i.e. $\QMA^+ = \QMA^+(2)$ but $\stateQMA^+ \ne \stateQMA^+(2)$. In fact, it is even possible that $\QMA = \QMA(2)$ but $\stateQMA \ne \stateQMA(2)$. This inquiry can help us understand whether (or how) \emph{the power of unentanglement} is useful when solving \emph{decision problems}.
\end{enumerate}

\section*{Acknowledgements}
Thanks to Zachary Remscrim for collaborating on early
stages of this project. 
Thanks to Noam Lifshitz, Dor Minzer, and Kevin Pratt for answering questions about algebraic constructions of expanders.
Thanks to Srinivasan Arunachalam, Fernando Granha Jeronimo, Supartha Podder, and Pei Wu for comments on a draft of this manuscript.

BF and RB acknowledge support from AFOSR (award number FA9550-21-1-0008). This material is based upon work partially supported by the National Science Foundation under Grant CCF-2044923 (CAREER) and by the U.S. Department of Energy, Office of Science, National Quantum Information Science Research Centers as well as by DOE QuantISED grant DE-SC0020360. KM acknowledges support from the National Science Foundation Graduate Research Fellowship Program under Grant No. DGE-1746045. Any opinions, findings, and conclusions or recommendations expressed in this material are those of the author(s) and do not necessarily reflect the views of the National Science Foundation.

\bibliography{_references}

\newcommand{\etalchar}[1]{$^{#1}$}
\begin{thebibliography}{CMMN20}

\bibitem[Aar21]{aaronson2021open}
Scott Aaronson.
\newblock Open Problems Related to Quantum Query Complexity, 2021.
\newblock \href{https://arxiv.org/abs/2109.06917}{arXiv:2109.06917}.

\bibitem[ABD{\etalchar{+}}08]{aaronson2008power}
Scott Aaronson, Salman Beigi, Andrew Drucker, Bill Fefferman, and Peter Shor.
\newblock The Power of Unentanglement, 2008.
\newblock \href{https://arxiv.org/abs/0804.0802}{arXiv:0804.0802}.

\bibitem[AKKT20]{sbp_qma}
Scott Aaronson, Robin Kothari, William Kretschmer, and Justin Thaler.
\newblock Quantum Lower Bounds for Approximate Counting via Laurent
  Polynomials, 2020.
\newblock URL \url{https://drops.dagstuhl.de/opus/volltexte/2020/12559/}.

\bibitem[ALM{\etalchar{+}}98]{ALMSS98}
Sanjeev Arora, Carsten Lund, Rajeev Motwani, Madhu Sudan, and Mario Szegedy.
\newblock Proof Verification and the Hardness of Approximation Problems.
\newblock {\em J. {ACM}}, 45(3):501--555, 1998.
\newblock URL \url{https://doi.org/10.1145/278298.278306}.

\bibitem[Alo21]{alon2021explicit}
Noga Alon.
\newblock Explicit expanders of every degree and size.
\newblock {\em Combinatorica}, pages 1--17, 2021.
\newblock \href{https://arxiv.org/abs/2003.11673}{arXiv:2003.11673}.

\bibitem[AS92]{AS92}
Sanjeev Arora and Shmuel Safra.
\newblock Probabilistic Checking of Proofs; {A} New Characterization of {NP}.
\newblock In {\em 33rd Annual Symposium on Foundations of Computer Science,
  Pittsburgh, Pennsylvania, USA, 24-27 October 1992}, pages 2--13. {IEEE}
  Computer Society, 1992.
\newblock URL \url{https://doi.org/10.1109/SFCS.1992.267824}.

\bibitem[BC23]{culf21}
Anne Broadbent and Eric Culf.
\newblock Rigidity for Monogamy-Of-Entanglement Games, 2023.
\newblock URL \url{https://drops.dagstuhl.de/opus/volltexte/2023/17531/}.

\bibitem[BFM22]{bassirian2022power}
Roozbeh Bassirian, Bill Fefferman, and Kunal Marwaha.
\newblock On the power of nonstandard quantum oracles.
\newblock 2022.
\newblock \href{https://arxiv.org/abs/2212.00098}{arXiv:2212.00098}.

\bibitem[BT10]{blier2010quantum}
Hugue Blier and Alain Tapp.
\newblock A quantum characterization of NP, 2010.
\newblock \href{https://arxiv.org/abs/0709.0738}{arXiv:0709.0738}.

\bibitem[Bur11]{burer2011copositive}
Samuel Burer.
\newblock Copositive programming.
\newblock In {\em Handbook on semidefinite, conic and polynomial optimization},
  pages 201--218. Springer, 2011.
\newblock URL
  \url{https://link.springer.com/chapter/10.1007/978-1-4614-0769-0_8}.

\bibitem[CD10]{chen2010short}
Jing Chen and Andrew Drucker.
\newblock Short multi-prover quantum proofs for SAT without entangled
  measurements.
\newblock 2010.
\newblock \href{https://arxiv.org/abs/1011.0716}{arXiv:1011.0716}.

\bibitem[CF11]{chiesa2011improved}
Alessandro Chiesa and Michael~A Forbes.
\newblock Improved soundness for QMA with multiple provers.
\newblock 2011.
\newblock \href{https://arxiv.org/abs/1108.2098}{arXiv:1108.2098}.

\bibitem[Che13]{cheng_primes}
Yuanyou~Furui Cheng.
\newblock Explicit Estimate on Primes between Consecutive Cubes, 2013.
\newblock \href{https://arxiv.org/abs/0810.2113}{arXiv:0810.2113}.

\bibitem[CHSH69]{chsh}
John~F Clauser, Michael~A Horne, Abner Shimony, and Richard~A Holt.
\newblock Proposed experiment to test local hidden-variable theories.
\newblock {\em Physical review letters}, 23(15):880, 1969.
\newblock URL \url{https://doi.org/10.1103/PhysRevLett.23.880}.

\bibitem[CMMN20]{Cui_2020}
David Cui, Arthur Mehta, Hamoon Mousavi, and Seyed~Sajjad Nezhadi.
\newblock A generalization of {CHSH} and the algebraic structure of optimal
  strategies.
\newblock {\em Quantum}, 4:346, oct 2020.
\newblock \href{https://arxiv.org/abs/1911.01593}{arXiv:1911.01593}.

\bibitem[DGF22]{deshpande2022importance}
Abhinav Deshpande, Alexey~V Gorshkov, and Bill Fefferman.
\newblock Importance of the Spectral gap in Estimating Ground-State Energies.
\newblock {\em PRX Quantum}, 3(4):040327, 2022.
\newblock \href{https://arxiv.org/abs/2007.11582}{arXiv:2007.11582}.

\bibitem[Din07]{dinur2007pcp}
Irit Dinur.
\newblock The PCP theorem by gap amplification.
\newblock {\em Journal of the ACM (JACM)}, 54(3):12--es, 2007.
\newblock URL \url{https://dl.acm.org/doi/abs/10.1145/1236457.1236459}.

\bibitem[Gha09]{gharibian2009strong}
Sevag Gharibian.
\newblock Strong NP-Hardness of the Quantum Separability Problem, 2009.
\newblock \href{https://arxiv.org/abs/0810.4507}{arXiv:0810.4507}.

\bibitem[GKS14]{sqma}
Alex~B. Grilo, Iordanis Kerenidis, and Jamie Sikora.
\newblock QMA with subset state witnesses, 2014.
\newblock \href{https://arxiv.org/abs/1410.2882}{arXiv:1410.2882}.

\bibitem[GNN12]{GallNN12}
Fran{\c{c}}ois~Le Gall, Shota Nakagawa, and Harumichi Nishimura.
\newblock On {QMA} protocols with two short quantum proofs.
\newblock {\em Quantum Inf. Comput.}, 12(7-8):589--600, 2012.
\newblock URL \url{https://doi.org/10.26421/QIC12.7-8-4}.

\bibitem[GSS{\etalchar{+}}18]{qma2_yirka}
Sevag Gharibian, Miklos Santha, Jamie Sikora, Aarthi Sundaram, and Justin
  Yirka.
\newblock Quantum Generalizations of the Polynomial Hierarchy with Applications
  to QMA(2), 2018.
\newblock URL \url{http://drops.dagstuhl.de/opus/volltexte/2018/9640/}.

\bibitem[Har04]{harsha2004robust}
Prahladh Harsha.
\newblock {\em Robust PCPs of proximity and shorter PCPs}.
\newblock PhD thesis, Massachusetts Institute of Technology, 2004.
\newblock URL
  \url{https://dspace.mit.edu/bitstream/handle/1721.1/26720/59552830-MIT.pdf}.

\bibitem[HM13]{harrow2013testing}
Aram~W. Harrow and Ashley Montanaro.
\newblock Testing product states, quantum Merlin-Arthur games and tensor
  optimization.
\newblock {\em Journal of the ACM (JACM)}, 60(1):1--43, 2013.
\newblock \href{https://arxiv.org/abs/1001.0017}{arXiv:1001.0017}.

\bibitem[JNV{\etalchar{+}}21]{mipre}
Zhengfeng Ji, Anand Natarajan, Thomas Vidick, John Wright, and Henry Yuen.
\newblock Mip*= re.
\newblock {\em Communications of the ACM}, 64(11):131--138, 2021.
\newblock \href{https://arxiv.org/abs/2001.04383}{arXiv:2001.04383}.

\bibitem[JW23]{qma2plus}
Fernando~Granha Jeronimo and Pei Wu.
\newblock The Power of Unentangled Quantum Proofs with Non-negative Amplitudes.
\newblock {\em 55th Annual ACM Symposium on Theory of Computing}, 2023.
\newblock URL \url{https://doi.org/10.1145/3564246.3585248}.

\bibitem[KMY03]{qma2_defn}
Hirotada Kobayashi, Keiji Matsumoto, and Tomoyuki Yamakami.
\newblock Quantum Merlin-Arthur Proof Systems: Are Multiple Merlins More
  Helpful to Arthur?, 2003.
\newblock
  \href{https://arxiv.org/abs/quant-ph/0306051}{arXiv:quant-ph/0306051}.

\bibitem[Kup15]{kuperberg2015hard}
Greg Kuperberg.
\newblock How hard is it to approximate the Jones polynomial?
\newblock {\em Theory of Computing}, 11(1):183--219, 2015.
\newblock URL \url{https://doi.org/10.4086/toc.2015.v011a006}.

\bibitem[Lub09]{lubotzky2009finite}
Alexander Lubotzky.
\newblock Finite simple groups of Lie type as expanders, 2009.
\newblock \href{https://arxiv.org/abs/0904.3411}{arXiv:0904.3411}.

\bibitem[McK13]{McKAGUE_2013}
Matthew McKague.
\newblock On the power quantum computation over real Hilbert spaces.
\newblock {\em International Journal of Quantum Information}, 11(01):1350001,
  feb 2013.
\newblock \href{https://arxiv.org/abs/1109.0795}{arXiv:1109.0795}.

\bibitem[MW05]{marriott2005quantum}
Chris Marriott and John Watrous.
\newblock Quantum Arthur-Merlin Games, 2005.
\newblock \href{https://arxiv.org/abs/cs/0506068}{arXiv:cs/0506068}.

\bibitem[MY04]{mayers2004self}
Dominic Mayers and Andrew Yao.
\newblock Self testing quantum apparatus, 2004.
\newblock
  \href{https://arxiv.org/abs/quant-ph/0307205}{arXiv:quant-ph/0307205}.

\bibitem[NZ23]{natarajan2023quantum}
Anand Natarajan and Tina Zhang.
\newblock Quantum free games.
\newblock 2023.
\newblock \href{https://arxiv.org/abs/2302.04322}{arXiv:2302.04322}.

\bibitem[Per12]{pereszlényi2012multiprover}
Attila Pereszlényi.
\newblock Multi-Prover Quantum Merlin-Arthur Proof Systems with Small Gap,
  2012.
\newblock \href{https://arxiv.org/abs/1205.2761}{arXiv:1205.2761}.

\bibitem[RUV13]{qmip}
Ben~W Reichardt, Falk Unger, and Umesh Vazirani.
\newblock Classical command of quantum systems.
\newblock {\em Nature}, 496(7446):456--460, 2013.
\newblock URL \url{https://doi.org/10.1038/nature12035}.

\bibitem[SY22]{she2022unitary}
Adrian She and Henry Yuen.
\newblock Unitary property testing lower bounds by polynomials, 2022.
\newblock \href{https://arxiv.org/abs/2210.05885}{arXiv:2210.05885}.

\bibitem[Tsi93]{tsirelson1993some}
Boris~S Tsirelson.
\newblock Some results and problems on quantum Bell-type inequalities.
\newblock {\em Hadronic Journal Supplement}, 8(4):329--345, 1993.
\newblock URL \url{http://www.math.tau.ac.il/~tsirel/download/hadron.pdf}.

\bibitem[Xu23]{xu2023quantifying}
Jianwei Xu.
\newblock Quantifying the phase of quantum states, 2023.
\newblock \href{https://arxiv.org/abs/2304.09028}{arXiv:2304.09028}.

\end{thebibliography}
\bibliographystyle{_alpha-betta-url}

\end{document}